\documentclass[a4paper, 10pt, final, BCOR=5mm, pdftex, pagesize]{amsart}%
\usepackage[a4paper,left=3cm,right=3cm,top=2.9cm,bottom=3cm]{geometry}
\usepackage[utf8]{inputenc} %
\usepackage{fancyhdr}
\usepackage{csquotes} %
\usepackage{verbatim}
\usepackage{underscore}
\usepackage{enumitem}
\usepackage{color}

\usepackage{lmodern} %
\usepackage[T1]{fontenc} %
\usepackage{setspace}
\usepackage{pdflscape}
\usepackage[colorinlistoftodos]{todonotes}

\usepackage{amsmath}
\usepackage{amsfonts}
\usepackage{amssymb}
\usepackage{mathrsfs}
\usepackage{mathtools}
\usepackage{dsfont} %
\usepackage{bigints} %
\usepackage{amsthm}
\usepackage{accents}

\usepackage{float} %
\usepackage[margin=0.8cm]{caption} %
\usepackage{subcaption}
\usepackage{graphicx}
\usepackage{pgfplots}
\pgfplotsset{compat=1.8}
\usepackage[draft=false,babel,tracking=true,kerning=true,spacing=true]{microtype} 
\usepackage{tabularx} 

\usepackage{booktabs}

\definecolor{CMblue}{RGB}{0, 153, 153}
\definecolor{CMgreen}{RGB}{76, 153, 0}
\definecolor{CMorange}{RGB}{237, 125, 49}
\definecolor{CMlightblue}{RGB}{204, 255, 255}

\usepackage{chngcntr}
\usepackage[section]{placeins} %
\numberwithin{equation}{section}

\usepackage[noadjust]{cite}
\usepackage{hyperref}

\usepackage{tikz}
\usetikzlibrary{shapes.geometric, patterns}
\pgfplotsset{compat=1.8}
\usepackage{setspace} %
\usepackage[draft=false,babel,tracking=true,kerning=true,spacing=true]{microtype} %

\setitemize{leftmargin=0.5cm}

\newcommand{\R}{\mathbb{R}}
\newcommand{\C}{\mathbb{C}}
\newcommand{\N}{\mathbb{N}}

\newcommand{\E}{\ensuremath\mathbb{E}}

\renewcommand{\leq}{\leqslant}
\renewcommand{\geq}{\geqslant}
\renewcommand{\bar}{\overline}
\renewcommand{\tilde}{\widetilde}

\newtheorem{The}{Theorem}[section]
\newtheorem*{The*}{Theorem}

\newtheorem{Lem}[The]{Lemma}
\newtheorem{Cor}[The]{Corollary}
\newtheorem{Rmk}[The]{Remark}

\catcode`,\active

\catcode`\,12

\providecommand{\keywords}[1]
{
  \small	
  \textbf{\textit{Keywords---}} #1
}

\providecommand{\subjclass}[2]{%
 \small	
  \textbf{\textit{2010 Mathematics subject classification---}} #1
}

\newcommand{\dd}{d}

\title{Expanding the rough Heston model in $H$}

\author{Paul P. Hager}
\address{Paul P. Hager, Department of Statistics and Operations Research, University of Vienna}
\email{paul.peter.hager@univie.ac.at}

\author{D\"orte Kreher}
\address{D\"orte Kreher, Institut für Mathematik, Humboldt-Universit\"at zu Berlin}
\email{doerte.kreher@hu-berlin.de}

\thanks{The authors thank the 
participants of the Rough Volatility Day in Nov 2025 for their feedback and 
insightful comments. Especially, they are grateful to Jim Gatheral for 
encouraging them to also look at expansions around $H_0=0$. D\"orte Kreher acknowledges funding by the Deutsche Forschungsgemeinschaft (DFG, German Research Foundation) – CRC/TRR 388 "Rough Analysis, Stochastic Dynamics and Related Fields“ – Project ID 516748464 (Project B02).}

\keywords{Rough Heston model, Hurst parameter, fractional Riccati equation,
parameter-dependent Volterra integral equations, Taylor expansion}

\subjclass[2020]{Primary 91G20; Secondary 45D05, 65R20, 60G22, 91G60}

\begin{document}

\date{\today}

\begin{abstract}\small
We study the dependence of the fractional Riccati equation in the rough Heston model on the Hurst parameter $H$. For each expansion point $H_0\in(-1/2,1/2]$, we derive a Taylor expansion of the Riccati solution in $H$, whose coefficients are characterized recursively as solutions of linear Volterra equations with fractional-logarithmic kernels. We prove local uniform convergence of the resulting Taylor series and, in particular, analyticity of the fractional Riccati solution in the Hurst parameter. Through the affine transform formula, this yields approximations of the rough Heston characteristic function and Fourier prices. Numerically, once a reference solution at $H_0$ is available, the expansion coefficients can be computed recursively and evaluated for many nearby values of $H$. We implement the method around $H_0=1/2$, using the classical Heston solution, and around $H_0=0$, using a Pad\'e approximation. Experiments for European call options indicate that low expansion orders already provide accurate implied volatilities across a wide range of Hurst parameters, including the hyper-rough regime.
\end{abstract}

\maketitle

\raggedbottom

\section{Introduction}\label{sec:intro}%

Rough volatility models, originating in the empirical discovery in \cite{GatheralJaissonRosenbaum2018} and framed for pricing in early works such as \cite{BayerFrizGatheral2016}, are distinguished from other volatility models by the lower-than-Brownian H\"older regularity of the volatility process and by their ability to reproduce the steep implied volatility skew observed at short maturities.
Over the last decade, rough volatility has gathered substantial research, see the book \cite{bayer2023rough} and the references therein, as well as debate \cite{FTW19,ContDas24,Fukasawa21}, and has become one of the established modeling frameworks for equity volatility.

The most workable instance of this class of models is the \emph{rough Heston model}. First introduced and motivated through a microstructural scaling limit in \cite{EER18,EER19}, this model is particularly popular because, like its classical Heston predecessor \cite{H93}, it retains an affine structure: the characteristic function of the log-price can be obtained as an exponentially affine functional of the solution to a fractional Riccati equation.

Specifically, the risk-neutral dynamics of a rough Heston model with Hurst parameter $H\in(0,1/2)$ are built from a two-dimensional standard Brownian motion $(B,W)$ and the fractional kernel
\[
K^{H}(t)=\frac{t^{H-1/2}}{\Gamma(H+1/2)},\quad t>0,
\]
via the stochastic Volterra-integral equations
\begin{align*}
    dS_t^{H}&=\sqrt{V^{H}_t}S_t^{H}\left(\rho dW_t+\sqrt{1-\rho^2}dB_t\right),\\
    V_t^{H}&=V_0+\int_0^tK^{H}(t-s)(\theta-\lambda V^{H}_s)ds+\int_0^tK^{H}(t-s)\nu\sqrt{V_s^{H}}dW_s,
\end{align*}
where $\theta>0$, $\lambda\geq\nu>0$, $\rho\in(-1,1)$, and $V_0\geq0$, $S_0>0$. 
The extended characteristic function of the log-price $X_T:=\log(S^{H}_T/S^{H}_0)$ is given by
\begin{equation}\label{eq:charfct}
\varphi^H(z):=\E\left[e^{zX_T}\right]=\exp\left(\int_0^TF\left(z,\psi^{H}(T-t,z)\right)g^{H}(t)dt\right)
\end{equation}
for all $z\in \C^*:=\{z=a+ib:a\in[0,1],b\in\R\}$, where
\begin{align}
    \label{def:gH}g^{H}(t)&:=V_0+\theta\int_0^tK^{H}(s)ds=V_0+\theta\frac{t^{H+1/2}}{\Gamma(H+3/2)},\\
    \label{def:F}F(z,x)&:=\frac{1}{2}(z^2-z)+(\rho\nu z-\lambda)x+\frac{\nu^2}{2}x^2,
\end{align}
and $\psi^{H}(\cdot,z)$ is the unique continuous solution to the fractional Riccati equation
\begin{equation}\label{eq:riccati}
    \psi^{H}(t,z)=\int_0^tK^{H}(t-s)F\left(z,\psi^{H}(s,z)\right)ds,\quad t\in[0,T].
\end{equation}
The Riccati equation \eqref{eq:riccati} is well-posed for all $H>-1/2$, cf.~\cite{EER18,EER19}; the range \(H\in(-1/2,0]\) corresponds to the hyper-rough regime of
\cite{AJ2021}, where the model is formulated through the integrated variance
process, since the kernel singularity is integrable but not square-integrable.

The representation \eqref{eq:charfct} reduces Fourier pricing in the rough Heston model to the computation of $\psi^{H}$. This is precisely where the numerical difficulty enters. For $H=1/2$, equation \eqref{eq:riccati} is the classical Heston Riccati equation and has a closed form solution. For $H\neq1/2$, however, no closed form solution is available, and one has to solve a fractional Riccati equation with a weakly singular Volterra kernel. This equation is nonlinear, nonlocal in time and has to be evaluated at a wide range of Fourier arguments, so that both, stability and computational cost become relevant for option pricing and calibration. Existing approaches include fractional predictor-corrector and Adams-type schemes \cite{diethelm_ford_freed_2002,jeng_kilicman_2020}, fast hybrid and product-integration schemes \cite{callegaro_grasselli_pages_2020}, global collocation and quasi-linearization methods \cite{DastgerdiBastani20}, Markovian and multifactor approximations of the Volterra kernel \cite{AJEE19,BB23}, and rational or Pad\'e-type approximations of the fractional Riccati solution \cite{gatheral_radoicic_2019,gatheral_radoicic_2024}.

The purpose of this paper is complementary to these approaches. Instead of solving \eqref{eq:riccati} from scratch for each value of $H$, we study the dependence of the fractional Riccati solution on the Hurst parameter itself. Writing
\[
    \psi:[0,T]\times (-1/2,\infty)\times \C^*\rightarrow\C,
    \qquad
    (t,H,z)\mapsto \psi(t;H,z),
\]
for the unique solution of \eqref{eq:riccati}, the main idea is to approximate $\psi(t;H,z)$ by its Taylor polynomial in $H$ around a fixed expansion point $H_0$, namely
\begin{equation*}
P_{N,H_0}(t;H,z):=\sum_{n=0}^N\partial_H^n\psi\left(t; H_0,z\right)\frac{(H-H_0)^n}{n!}.
\end{equation*}
We first verify that $\psi$ is infinitely often continuously differentiable in $H$ and identify the coefficients $\partial_H^n\psi(t;H_0,z)$ recursively as the solutions of \emph{linear} Volterra equations. We then prove that the Taylor series converges locally uniformly around every expansion point $H_0\in(-1/2,1/2]$ and, in particular, that $H\mapsto\psi(t;H,z)$ is analytic.

Apart from being of independent interest for sensitivity analysis with respect to $H$, this result directly gives rise to a numerical scheme. Indeed, the linear Volterra equations solved by the expansion terms involve tractable fractional-logarithmic kernels, and their forcing terms are constructed recursively from the previously computed expansion terms. Thus, starting from a reference solution $\psi^{H_0}$, the expansion coefficients can be computed efficiently. 
Using an explicit product-integration scheme for these linear Volterra equations, the expansion terms are obtained through matrix multiplications and can then be evaluated efficiently against several values of $H$ at once.

This is particularly useful for the expansion point $H_0=1/2$, where the classical Heston solution is known explicitly and the linear Volterra equations for the expansion coefficients reduce to linear ODEs. Our numerical experiments show, perhaps surprisingly, that the expansion around $H_0=1/2$ remains stable even for values of $H$ far below the expansion point and can be used for option pricing, for instance with $H=0.2$, at maturities $T=0.083$ or larger.

Further experiments show that the expansion around $H_0=0$ is especially effective in the hyper-rough regime. In this case, the reference solution can be obtained efficiently from a Pad\'e approximation, and the resulting expansion remains stable for negative values of $H$ down to $H=-0.3$ at maturities $T=0.019$ or larger, a regime which is reported to be particularly challenging for direct discretization schemes \cite{BB23}.

Our theoretical results guarantee convergence of the expansion in a
neighbourhood of the expansion point, but the resulting rigorous bounds are
clearly non-sharp.  That is because our method of proof relies on the one hand on recursive estimates due to the structure of the linear Volterra equations and on the other hand on estimates of the integrated \textit{absolute} fractional-logarithmic kernels. For reliable use in option pricing, it is therefore useful to have a practical diagnostic for the effective radius of convergence. Since the expansion can be computed very efficiently for many values of \(H\), convergence can however be assessed empirically by comparing sufficiently high expansion orders
across Fourier nodes in the relevant pricing range and across the relevant
range of Hurst parameters. Both, the theoretical error estimate as well as the numerical results suggest that the radius of convergence generally depends on the size of the Fourier node in an inverse manner. Only for values sufficiently close to the expansion point, the small time asymptotics of the error estimate proposes that the series may converge for all Fourier arguments.

In the course of constructing the numerical benchmarks, we also record two auxiliary numerical variants for the direct approximation of the fractional Riccati equation; see Appendix~\ref{sec:implicit_riccati_pi} and Appendix~\ref{app:root-pade}. The first is a fully implicit  product integration scheme of the nonlinear Volterra equation. Since the Riccati driver is quadratic, the current-node update reduces to a scalar quadratic equation, whose branch is selected by comparison with the corresponding linearized implicit update. The second is a root-Pad\'e variant of the rational approximation of \cite{gatheral_radoicic_2019,gatheral_radoicic_2024}. Here the Pad\'e approximation is not applied directly to the Riccati solution, but to the transformed factors, defined through the stable and unstable stationary Riccati roots. These variants are not the main focus of the paper, but they provide significantly stable and accurate reference computations.

\textbf{Further related literature. }
Expansion methods are a standard tool in stochastic volatility modelling and appear in several forms depending on the parameter or regime being exploited. 
Classical examples include asymptotic smile formulae such as the SABR expansion \cite{HaganKumarLesniewskiWoodward02}, multiscale perturbation methods based on fast and slow volatility factors \cite{FouquePapanicolaouSircarSolna11}, martingale expansions for option prices and implied volatilities \cite{Fukasawa11}, expansions derived directly from characteristic functions \cite{JacquierLorig15}, second-order volatility expansions of implied volatility surfaces \cite{BergomiGuyon12}, Taylor expansions of the pricing operator in local-stochastic volatility models \cite{LorigPagliaraniPascucci13}, and algebraic expansions of the conditional characteristic function \cite{AlosGatheralRadoicic20,FrizGatheral25}.
For Heston-type models, small volatility-of-volatility expansions have been developed in \cite{BenhamouGobetMiri10}, and related perturbative methods have been used for extensions of the Heston model \cite{FouqueSaporito18}. In rough volatility, expansion and asymptotic methods appear in the analysis of
the short-time skew \cite{Fukasawa17}, small-time and large-deviation
asymptotics for rough stochastic volatility and Volterra models
\cite{FordeZhang17,JacquierPakkanenStone18,JacquierPannier22}, short-term
at-the-money expansions
\cite{ElEuchFukasawaGatheralRosenbaum19,JacquierMuguruzaPannier25}, refined
rough-volatility smile asymptotics
\cite{FrizGassiatPigato21,FrizGassiatPigato22}, and weak approximation
\cite{friz2025weak} or price expansions in forward variance models
\cite{BourgeyDeMarcoGobet22}. For the rough Heston model specifically, small-time, large-time, and $H\to0$
asymptotics have been studied in \cite{FordeGerholdSmith21}, while
\cite{AbiJaberDeCarvalho2024} study asymptotic limits on a reversionary time
scale.
To the best of our knowledge, expansions of option prices, characteristic functions, or other distributional quantities with respect to the Hurst parameter itself have not been studied in the rough volatility literature.
A natural direction for future work is to understand how such expansions in $H$ can be combined with existing expansions in other model parameters, option parameters, or asymptotic regimes.

The dependence on the kernel parameter in fractional and Volterra-type equations has been studied in several related settings, although mostly at the level of the driving noise process or the resulting stochastic equation rather than at the level of the Riccati equation considered here. In \cite{KN19}, it is shown that, for almost all $\omega\in\Omega$, the map $H\mapsto B^H_t(\omega)$ is infinitely often continuously differentiable and that the corresponding derivatives belong to $L^p$ for all $p\geq1$. In \cite{VGMU21}, continuity with respect to the Hurst parameter is established for rough differential equations driven by one-dimensional fractional Brownian motion for $H\in(1/3,1/2]$, in the sense of weak convergence with respect to the rough path topology. A broader overview of stability results for stochastic differential equations driven by fractional Brownian motion with respect to the Hurst parameter can be found in the recent survey \cite{KNP25}.

\textbf{Outline of the paper. }
Section~\ref{sec:riccati_expansion} studies the fractional Riccati equation as a parameter-dependent Volterra equation in $H$. We derive the recursive linear Volterra equations for the coefficients $\partial_H^n\psi(t;H_0,z)$ and prove the analyticity of $H\mapsto\psi(t;H,z)$ by solving a quadaratic recurrence relation. %
Section~\ref{sec:numerics} discusses the numerical implementation of the expansion and presents experiments for expansions around $H_0=1/2$ and $H_0=0$. The appendices collect details on the quadratic recurrence relations, the Volterra estimates, and the product-integration schemes used in the proofs and numerical implementation.

\section{Approximation of the fractional Riccati solution}\label{sec:riccati_expansion}

In the following, we will treat \eqref{eq:riccati} as a parameter depend Volterra equation, depending on $H\in(-1/2,\infty)$ and $z\in\C^*$. %
The main idea of the paper is to approximate $\psi$ by the first $N$ terms of its Taylor series in $H$ around $ H_0$, i.e.
\begin{equation}\label{eq:taylorpsi}
\psi(t;H,z)\approx\sum_{n=0}^N\partial_H^n\psi\left(t;H_0,z\right)\frac{(H-H_0)^n}{n!}.
\end{equation}
To this end, we will show that the partial derivatives with respect to $H$ do indeed exist and hence that the RHS of \eqref{eq:taylorpsi} is well-defined. Having computed $ \psi(\cdot;H_0,z)$, we formally differentiate \eqref{eq:riccati} with respect to $H$ to obtain an equation for $\partial_H\psi(\cdot;H_0,z)$: for all $H_0\in(-1/2,\infty)$, we have
\begin{equation}\label{eq:difH}
\begin{split}
    \partial_H\psi(t;H_0,z)&=\int_0^t\partial_HK^{H_0}(t-s) F\left(z,\psi(s;H_0,z)\right)ds\\
    &\quad+\int_0^tK^{H_0}(t-s)F_x\left(z,\psi(s;H_0,z)\right)\partial_H\psi(s;H_0,z)ds,
    \end{split}
\end{equation}
where 
\[\partial_HK^{H_0}(s)=K^{H_0}(s)\left(\ln(s)-\gamma(H_0+1/2)\right)\]
and $\gamma:=\Gamma'/\Gamma$ denotes the digamma function. The differentiability of $\psi$ with respect to $H$ and the validity of \eqref{eq:difH} is rigorously proven in Theorem \ref{thm:ndif} below. All further derivatives $\partial_H^n\psi(\cdot;H_0,z)$ for $n\geq 2,$ can be computed iteratively as solutions to similar linear Volterra equations, cf.~Theorem \ref{thm:ndif} below. 

Especially, for $H=1/2$ we note that the $0$-th order term in \eqref{eq:taylorpsi}, i.e.~${\psi}(t;1/2,z)$, solves the Riccati equation of the standard Heston model, i.e. 
\[{\psi}(t;1/2,z)=\int_0^t{F}\left(z,{\psi}(s;1/2,z)\right)ds,\quad t\in[0,T],\]
and is hence known in closed form. Similarly, the partial derivative $\partial_H\psi$ at $H=1/2$ solves the following standard linear first-order ODE: for all $t\in[0,T]$,
\begin{align*}
\partial_H\psi\left(t;1/2,z\right)&=\int_0^t F_x(z,\psi(s;1/2,z))\partial_H\psi\left(s;1/2,z\right)ds\\
&\qquad\qquad+\int_0^t(\ln(t-s)-\gamma(1)) F\left(z,\psi\left(s;1/2,z\right)\right)ds.
\end{align*}

In the following, we will provide a mathematically rigorous justification for our heuristic approach outlined above. 
It is well-known that %
for every fixed $H\in(-1/2,\infty)$ and $z\in\C^*$ there exists a unique continuous solution on $[0,T]$ to the fractional Riccati equation \eqref{eq:riccati}, cf.~\cite[Corollary 3.2]{EER18} and \cite[Theorem 4.1]{EER19}. In the next subsection, we will first show that the map $H\mapsto \psi(t;H,z)$ is infinitely often continuously differentiable and that the partial derivatives $\partial_H^n\psi(t;H,z),\ n\in\N,$ satisfy a certain linear first-order ODE. In a second step, we will prove in subsection \ref{subs:analyticity} the analyticity of $\psi(t;H,z)$ in $H$; more precisely, we will prove an error estimate for the remainder term of the associated Taylor series \eqref{eq:taylorpsi}. %
In what follows, we always set $T=1$. 

\subsection{Parameter dependence and differentiability of $\psi$ with respect to $H$}

In this subsection, we will prove that for every $z\in\C^*$ the unique solution to \eqref{eq:riccati} depends continuously on $H\in(-1/2,\infty)$ and is infinitely often continuously  differentiable with respect to $H$. 

\begin{The}\label{thm:ndif}
   The unique solution $\psi$ of \eqref{eq:riccati} is infinitely often continuously differentiable with respect to $H\in(-1/2,\infty)$ and its partial derivatives are the unique solution of the following linear Volterra equations: for $n\in\N$, $H\in(-1/2,\infty)$, and $z\in\C^*$,
   \begin{equation}\label{eq:ordern}
    \partial_H^n\psi(t;H,z)
        =\int_0^t K^{H}(t-s)F_x\left(z,\psi(s;H,z)\right)\partial^n_H\psi(s;H,z)ds+A_n(t;H,z),\quad t\in[0,1],
\end{equation} 
where 
\begin{equation}\label{eq:a_n}
\begin{split}
    A_n(t;H,z)&:=\int_0^t\partial_H^nK^H(t-s) F\left(z,\psi(s;H,z)\right)ds\\
       &\qquad+\sum_{k=1}^{n-1}\binom{n}{k}\int_0^t\partial_H^kK^H(t-s) F_x\left(z,\psi(s;H,z)\right)\partial^{n-k}_H\psi(s;H,z)ds\\
        &\qquad+\frac{\nu^2}{2}\sum_{k=0}^{n-2}\binom{n}{k}\int_0^t\partial^k_HK^H(t-s)\sum_{j=1}^{n-k-1}\binom{n-k}{j}\partial_H^j\psi(s;H,z)\partial^{n-k-j}_H\psi(s;H,z)ds
        \end{split}
\end{equation}
and 
    \begin{equation}\label{eq:diffkernel}
    \partial^n_HK^H(s)=B_n\left(\log(s)-\gamma(H+1/2),-\gamma^{(1)}(H+1/2),\dots,-\gamma^{(n-1)}(H+1/2)\right)K^H(s)
\end{equation}
with $\gamma^{(k)}$ denoting the $k$-th polygamma function and $B_n$ denoting the $n$-th complete Bell polynomial defined as
\begin{equation*}
    B_n(x_1,\dots,x_n):=\sum_{k_1,\dots,k_n\geq0:\ \sum_{i=1}^n ik_i=n}\binom{n}{k_1\dots k_n}\prod_{j=1}^n\left(\frac{x_j}{j!}\right)^{k_j}.
\end{equation*}
\end{The}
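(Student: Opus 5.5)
The plan is to argue by induction on $n$, treating \eqref{eq:riccati} as a parameter-dependent nonlinear Volterra equation. The kernel formula \eqref{eq:diffkernel} comes first and is elementary. We write $K^H(s)=\exp\bigl((H-1/2)\log s-\log\Gamma(H+1/2)\bigr)$ and apply Fa\`a di Bruno's formula to $\exp(f(H))$ with $f(H)=(H-1/2)\log s-\log\Gamma(H+1/2)$. Since $f'=\log s-\gamma(H+1/2)$ and $f^{(k)}=-\gamma^{(k-1)}(H+1/2)$ for $k\ge2$, this yields the complete Bell polynomial expression. The same formula gives the integrability estimate that will be used repeatedly: on compact sets $H\in[a,b]\subset(-1/2,\infty)$,
\[
|\partial_H^kK^H(s)|\le C_k\, s^{a-1/2}(1+|\log s|)^k,
\]
uniformly in $H$. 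The right-hand side is in $L^1(0,1)$ because $a>-1/2$. We then fix such an interval and $z$.

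The next step is a priori bounds and continuity in $H$. I would first show that $\sup_{H\in[a,b]}\|\psi(\cdot;H,z)\|_\infty<\infty$. One route is the known bound $\Re\psi\le0$ together with the structure of $F$, as in \cite{EER19}. Alternatively, one can argue from local existence on a uniform interval, with length controlled by $\int_0^\delta s^{a-1/2}ds$, and extend it using the a priori bound. Continuity in $H$ then follows by subtracting the equations for $H$ and $H'$. This gives
\[
\psi^H-\psi^{H'}=\int_0^t K^H(t-s)\bigl(F(\psi^H)-F(\psi^{H'})\bigr)ds+\int_0^t\bigl(K^H-K^{H'}\bigr)(t-s)F(\psi^{H'})ds .
\]
Because $F$ is locally Lipschitz on bounded sets, a Gronwall lemma for weakly singular kernels closes the estimate: either the resolvent of $C s^{a-1/2}$ or the generalized Gronwall inequality. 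The second term tends to zero by dominated convergence in $L^1$ of the kernels.

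For differentiability, suppose inductively that $\psi\in C^{n-1}$ in $H$ and that the derivatives solve \eqref{eq:ordern}. We write the linear equation \eqref{eq:ordern} for a candidate $\phi_n$. It has a unique continuous solution because it is linear Volterra with kernel $K^H F_x(\psi)$, which is bounded by $C s^{a-1/2}$, and the forcing $A_n$ is continuous in $(t,H)$ by the induction hypothesis and the kernel bounds. We then form the difference quotient $D_h=(\partial_H^{n-1}\psi^{H+h}-\partial_H^{n-1}\psi^H)/h$. Its equation is obtained by subtracting the $(n-1)$-st equations. The idea is to show that $D_h-\phi_n$ satisfies the same linear Volterra equation with a forcing term that tends to $0$ uniformly in $t$. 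Gronwall then gives $D_h\to\phi_n$ uniformly, and continuity of $\phi_n$ in $H$ follows as in the continuity step. Identifying the forcing with $A_n$ is the Leibniz rule applied to $\partial_H^{n}[K^H\cdot F(\psi)]$. We use that $F$ is quadratic, so $\partial_H^m F(\psi)=F_x\partial_H^m\psi+\tfrac{\nu^2}{2}\sum_{j=1}^{m-1}\binom{m}{j}\partial^j_H\psi\,\partial^{m-j}_H\psi$. The first sum in $A_n$ comes from the $F_x$ terms with $k\ge1$, and the last sum from the quadratic terms; the sum stops at $k\le n-2$ because the inner sum is empty otherwise.

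The main obstacle is justifying the passage to the limit inside the convolution for the remainder terms. These are of the form $\int_0^t\bigl[(\partial^k_HK^{H+h}-\partial_H^kK^H)/h-\partial_H^{k+1}K^H\bigr](t-s)\,G(s)ds$, together with cross terms in which differences of kernels multiply differences of $\psi$-derivatives. The plan is to use the mean value theorem in $H$ pointwise in $s$. Combined with the uniform bound on $\partial_H^{k+2}K^{\xi}$ over $\xi\in[a,b]$, this gives a dominating function $C s^{a-1/2}(1+|\log s|)^{k+2}\in L^1$, uniformly in $t$, so the remainders are $O(h)$. Cross terms are then controlled by the continuity already established at the lower levels.
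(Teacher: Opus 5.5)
Your proposal is correct and shares the paper's skeleton: Fa\`a di Bruno for $\partial_H^nK^H$, induction on $n$, and identification of $A_n$ through the Leibniz rule and the quadratic form of $F$. The paper does the last step by explicit binomial bookkeeping of $\partial_HA_n$.

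The difference lies in the analytic core. The paper never forms difference quotients. It invokes the parameter-differentiability theorem of Gripenberg--Londen--Staffans (Theorem~\ref{thm:Voldif}) after identifying $\C$ with $\R^2$. It therefore only checks that $k$ and $k_H$ are Volterra kernels of continuous type and that $\sup_t\int_0^t|k(H+\varepsilon,t,s)-k(H,t,s)-\varepsilon k_H(H,t,s)|ds=o(\varepsilon)$, the latter by an explicit computation for the undifferentiated kernel. It then reapplies the same theorem to the linear equation for $\partial_H^n\psi$, asserting that differentiating $A_n$ under the integral is harmless.

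You effectively prove that theorem by hand in this convolution setting, via difference quotients and a weakly singular Gronwall inequality. You get the kernel remainder estimates for every order $k$ at once from Taylor's theorem in $H$, with the uniform dominating function $s^{a-1/2}(1+|\log s|)^{k+2}$. This is more self-contained, and it makes explicit the higher-order kernel estimates the paper only asserts.

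The price is that you must supply what the black box handles internally:
\begin{itemize}
\item a bound on $\psi$ that is uniform for $H$ in compacts (Lemma~\ref{lem:boundpsi} provides it, or you can bootstrap from your continuity estimate);
\item the nonlinear base step $n=1$.
\end{itemize}
At $n=1$ the difference quotient $D_h$ satisfies a linear equation with the secant coefficient $\rho\nu z-\lambda+\frac{\nu^2}{2}(\psi^{H+h}+\psi^H)$. The remainder therefore contains $\int_0^tK^H(t-s)\frac{\nu^2}{2}(\psi^{H+h}-\psi^H)(s)D_h(s)ds$, which involves the unknown $D_h$ itself. Your phrase ``continuity at lower levels'' does not cover this term. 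You first need $\sup_h\|D_h\|_\infty<\infty$, which follows from Gronwall with the uniformly bounded secant coefficient; alternatively, compare $D_h$ with $\phi_1$ directly inside the secant equation. This is routine but should be stated. For $n\ge2$ the analogous term contains $(\psi^{H+h}-\psi^H)/h$, which converges uniformly because $\psi$ is already $C^1$, so your argument goes through as written there.
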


\begin{proof}
We want to apply Theorem \ref{thm:Voldif}. To this end, we may identify $\C$ with $\R^2$ and note that the two parametrized Volterra kernels \begin{align*}
   k(H,t,s)&:=\frac{(t-s)_+^{H-1/2}}{\Gamma(H+1/2)},\qquad s,t\in[0,1],\\
   \text{and}\quad k_H(H,t,s)&:=(\ln((t-s)_+)-\gamma(H+1/2))k(H,t,s),\qquad s,t\in[0,1],
\end{align*}
are both in $L^1([0,t],ds)$ for all $(t,H)\in[0,1]\times(-1/2,\infty)$ and 
satisfy for all $0\leq t_1\leq t_2\leq 1$
as $|t_2-t_1|\to0$,
\begin{align*}
\int_0^1|k(H,t_2,s)-k(H,t_1,s)|ds%
&=\frac{2(t_2-t_1)^{H+1/2}+t_1^{H+1/2}-t_2^{H+1/2}}{\Gamma(H+3/2)}\to0,
\end{align*}
and for $H\in(-1/2,1/2]$,
\begin{align*}
&\int_0^1|k_H(H,t_2,s)-k_H(H,t_1,s)|ds%
\leq \gamma(H+1/2)\int_0^1|k(H,t_2,s)-k(H,t_1,s)|ds\\
&+\frac{2(t_2-t_1)^{H+1/2}+t_1^{H+1/2}-t_2^{H+1/2}}{\Gamma(H+3/2)(H+1/2)}-\frac{2\ln(t_2-t_1)(t_2-t_1)^{H+1/2}+\ln(t_1)t_1^{H+1/2}-\ln(t_2)t_2^{H+1/2}}{\Gamma(H+3/2)},
\end{align*}
which also converges to zero as $|t_2-t_1|\to0$.
A similar continuity property of $k_H$ holds for $H>1/2$. Moreover, for all $t\in[0,1]$ and $H\in(-1/2,\infty)$,
\begin{align*}
    &\Gamma(H+1/2)\int_0^t|k(H+\varepsilon, t,s)-k(H,t,s)-\varepsilon k_H(H,t,s)|ds\\
    &\qquad\qquad=\int_0^ts^{H-1/2}\left(\frac{s^\varepsilon\Gamma(H+1/2)}{\Gamma(H+\varepsilon+1/2)}-1-\varepsilon \left(\log(s)-\gamma(H+1/2)\right)\right)ds\\
 &\qquad\qquad=t^{H+1/2}\left(\frac{t^{\varepsilon}\Gamma(H+1/2)}{\Gamma(H+\varepsilon+3/2)}-\frac{1-\varepsilon\gamma(H+1/2)}{H+1/2}
   +\frac{\varepsilon(1-(H+1/2)\log(t))}{(H+1/2)^2}\right)\\
   &\qquad\qquad\simeq\frac{(H+1/2)^2\frac{t^\varepsilon}{1+\varepsilon \gamma(H+1/2)}-(H+1/2)(H+\varepsilon+1/2)(1-\varepsilon\gamma(H+1/2))}{(H+1/2)^2(H+\varepsilon+1/2)}\\
      &\qquad\qquad\qquad +\frac{\varepsilon(H+\varepsilon+1/2)(1-(H+1/2)\log(t)}{(H+1/2)^2(H+\varepsilon+1/2)}\\
      &\qquad\qquad\simeq\frac{(1+\varepsilon\log(t))(1-\varepsilon\gamma(H+1/2))-1+\varepsilon\gamma(H+1/2)-\varepsilon\log(t)+o(\varepsilon)}{(H+\varepsilon+1/2)}  =o(\varepsilon).
\end{align*}
Since $ F\in C^\infty$, Theorem \ref{thm:Voldif} may be applied to the non-linear Volterra equation \eqref{eq:riccati}. Hence, for every $H\in(-1/2,\infty)$, there exists a unique solution to \eqref{eq:riccati} and the solution is continuously differentiable in $H$ with $\partial_H\psi$ satisfying \eqref{eq:ordern} for $n=1$. %

To prove the claim for higher order derivatives, we proceed by induction: suppose $\psi$ is continuously differentiable with respect to $H$ up to order $n$ and that \eqref{eq:ordern} holds for all $k\leq n$. Suppose that we can show that $A_n$ is continuously differentiable with respect to $H$ and that we may differentiate inside the integral. Then, we may apply Theorem \ref{thm:Voldif} to $\partial_H^n\psi(\cdot;\cdot,z)$ and may deduce that $\partial_H^{n+1}\psi(\cdot;\cdot,z)$ exists, is continuous, and satisfies for all $t\in[0,1]$ and $H\in(-1/2,\infty)$,
\begin{equation}\label{eq:n+1}
\begin{split}
   \partial_H^{n+1}\psi(t;H,z)=&\partial_HA_n(t;H,z)+\int_0^tK^H(t-s)F_x\left(z,\psi(s;H,z)\right)\partial_H^{n+1}\psi(s;H,z)ds\\
    &\quad +\int_0^t\partial_HK^H(t-s)F_x\left(z,\psi(s;H,z)\right)\partial_H^n\psi(s;H,z)ds\\
   &\quad+\nu^2\int_0^tK^H(t-s)\partial_H\psi(s;H,z)\partial_H^n\psi(s;H,z)ds.
   \end{split}
\end{equation}
If we are allowed to differentiate inside the integral, we will obtain 
\begin{align*}
&\partial_HA_n(t;H,z)=\int_0^t\partial_H^{n+1}K^H(t-s) F\left(z,\psi(s;H,z)\right)ds\\
&\quad+\int_0^t\partial_H^nK^H(t-s)\partial_H\psi(s;H,z) F_x\left(z,\psi(s;H,z)\right)ds\\
       &\quad+\sum_{k=1}^{n-1}\binom{n}{k}\int_0^t\partial_H^{k+1}K^H(t-s) F_x\left(z,\psi(s;H,z)\right)\partial^{n-k}_H\psi(s;H,z)ds\\
       &\quad+\sum_{k=1}^{n-1}\binom{n}{k}\int_0^t\partial_H^kK^H(t-s)F_x\left(z,\psi(s;H,z)\right)\partial^{n-k+1}_H\psi(s;H,z)ds\\
    &\quad+\nu^2\sum_{k=1}^{n-1}\binom{n}{k}\int_0^t\partial_H^kK^H(t-s)\partial_H\psi(s;H,z)\partial^{n-k}_H\psi(s;H,z)ds\\      
        &\quad+\frac{\nu^2}{2}\sum_{k=0}^{n-2}\binom{n}{k}\int_0^t\partial_H^{k+1}K^H(t-s)\sum_{j=1}^{n-k-1}\binom{n-k}{j}\partial_H^j\psi(s;H,z)\partial^{n-k-j}_H\psi(s;H,z)ds\\
        &\quad+\frac{\nu^2}{2}\sum_{k=0}^{n-2}\binom{n}{k}\int_0^t\partial_H^kK^H(t-s)\cdot\\
        &\quad\qquad\sum_{j=1}^{n-k-1}\binom{n-k}{j}\left[\partial_H^{j+1}\psi(s;H,z)\partial^{n-k-j}_H\psi(s;H,z)+\partial_H^j\psi(s;H,z)\partial^{n-k-j+1}_H\psi(s;H,z)\right]ds\\         
         &=\int_0^t\partial_H^{n+1}K^H(t-s) F\left(z,\psi(s;H,z)\right)ds\\
    &\quad+\sum_{k=2}^{n}\int_0^t\binom{n+1}{k}\partial_H^kK^H(t-s) F_x\left(z,\psi(s;H,z)\right)\partial^{n+1-k}_H\psi(s;H,z)ds\\      
  &\quad+\int_0^tn\partial_HK^H(t-s)F_x\left(z,\psi(s;H,z)\right)\partial^{n}_H\psi(s;H,z)ds\\    
        &\quad+\frac{\nu^2}{2}\sum_{k=0}^{n-1}\int_0^t\partial_H^kK^H(t-s)\binom{n+1}{k}\sum_{j=1}^{n-k}\binom{n+1-k}{j}\partial_H^j\psi(s;H,z)\partial^{n+1-k-j}_H\psi(s;H,z)ds\\
        &\quad-\nu^2\int_0^tK^H(t-s)\partial_H\psi(s;H,z)\partial_H^n\psi(s;H,z)ds.
\end{align*}
Plugging this expression into \eqref{eq:n+1} we get \eqref{eq:ordern} for $n+1$. Hence, it only remains to show that $A_n$ is continuously differentiable with respect to $H$, provided that $\psi$ is $n$ times continuously differentiable in $H$, and that we may differentiate inside the integral. But this follows easily from the fact that for all $s>0$ and $k\in\N$ by Fa\`a-di Bruno formula,
\begin{equation*}
    \partial^n_HK^H(s)=B_n\left(\log(s)-\gamma(H+1/2),-\gamma^{(1)}(H+1/2),\dots,-\gamma^{(n-1)}(H+1/2)\right)K^H(s),
\end{equation*}
and
\[(t-s)^{H-1/2}\ln^k(t-s)\in L^1([0,t],ds),\]
together with the continuity of the functions $F$ and $F_x$ and the induction hypothesis, which in particular implies that $\psi$ and its first $n$ partial derivatives $\partial_H^k\psi(\cdot;\cdot,z),\ k=1,\dots,n$, are continuous on $[0,1]\times (-1/2,\infty)$ for every $z\in\C^*$. 
\end{proof}

\subsection{Analyticity of $\psi$ in $H$}\label{subs:analyticity}

We want to approximate $\psi(t;H,z)$ by its Taylor polynomial of degree $N$ around $H_0$, i.e.~by
\begin{equation}\label{eq:taylor}
P_{N,H_0}(t;H,z):=\sum_{n=0}^N\partial_H^n\psi\left(t; H_0,z\right)\frac{(H-H_0)^n}{n!}.
\end{equation}
To estimate the remainder of order $N$ of the approximation of $\psi(t;H,z)$ through $P_{N,H_0}(t;H,z)$, i.e.
\begin{equation*}
R_{N,H_0}(t;H,z):=\psi(t;H,z)-P_{N,H_0}(t;H,z),
\end{equation*}
we can apply Taylor's theorem separately to the real and imaginary part of $\psi(t;H,z)$ and obtain the upper bound
\begin{equation}\label{eq:restglied}
\left|R_{N,H_0}(t;H,z)\right|\leq \sqrt{2} \cdot\frac{\sup_{\tilde{H}\in[H\wedge H_0,H\vee H_0]}\left|\partial_H^{N+1}\psi\left(t;\tilde{H},z\right)\right|}{(N+1)!}\cdot |H-H_0|^{N+1}.
\end{equation}
Recall from Theorem \ref{thm:ndif} that
\begin{align*}
    \partial_H^n\psi(t;H,z)
        =\int_0^t (t-s)^{H-1/2}F_x(z,\psi(s;H,z))\partial^n_H\psi(s;H,z)ds+A_n(t;H,z),
\end{align*}
where $A_n$ was defined in \eqref{eq:a_n}. To estimate the Taylor remainder we have to divide the above Volterra equation by $n!$ on both sides and to find a bound for
\begin{equation}\label{eq:an}
\begin{split}
    a_n(t;H,z)&:=\frac{A_n(t;H,z)}{n!}=\int_0^t\frac{K^H_n(t-s)}{n!} F\left(z,\psi(s;H,z)\right)ds\\
        &+\sum_{k=1}^{n-1}\int_0^t\frac{\partial_H^kK^H(t-s)}{k!}
        F_x(z,\psi(s;H,z))\frac{\partial^{n-k}_H\psi(s;H,z)}{(n-k)!}ds\\
        &+\frac{\nu^2}{2}\sum_{k=0}^{n-2}\int_0^t\frac{\partial_H^kK^H(t-s)}{k!}\sum_{j=1}^{n-k-1}\frac{\partial_H^j\psi(s;H,z)}{j!}\frac{\partial^{n-k-j}_H\psi(s;H,z)}{(n-k-j)!}ds.
\end{split}
\end{equation}
This will be the main challenge in order to prove the analyticity of $\psi$ in $H\in(-1/2,\infty)$. We have the following result, which 
shows that the Taylor series for $\psi(t;H,z)$ in any $H_0\in(-1/2,\infty)$ converges uniformly to $\psi(t;H,z)$ in a neighbourhood of $H_0$, i.e.~the function $H\mapsto\psi(t;H,z)$ is real analytic for every $z\in\C^*$ and $t\in[0,1]$.

\begin{The}\label{thm:errorbound}
Let $t\in[0,1]$, $z=a+ib\in \C^*$ and $H,H_0\in(-1/2,0.95)$. Then for all $N\in\N$,
\begin{align*}
    \sup_{t\in[0,1]}\left|R_{N,H_0}(t;H,z)\right|&\leq \frac{1}{\sqrt{2}N}\left(\frac{H\wedge H_0}{2}+\frac{1}{4}\right)^{-N-1}\left|H-H_0\right|^{N+1}\cdot \\
&\quad\left(\left(r_1(b,H\wedge H_0)+r_2(b,H\wedge H_0)\right)^2+\frac{r^2_1(b,H\wedge H_0)}{2}\right)^{(N+1)/2},
\end{align*}
where for $|b|\to\infty$,
\[r_1(b,H\wedge H_0)\sim_{H\wedge H_0} 1+|b|^4\qquad\text{and}\qquad r_2(b,H\wedge H_0)\sim_{H\wedge H_0}1+|b|^8.\]
Especially, the function $H\mapsto\psi(t;H,z)$ is real analytic on $(-1/2,0.9)$ for every $z\in\C^*$ and $t\in[0,1]$.\footnote{With slightly more effort, one could also prove the analyticity on the whole interval $(-1/2,\infty)$ instead of $(-1/2,0.95)$. However, we are only interested in the rough case $H<1/2$ and $H_0\leq 1/2$, so we simply used the fact that the digamma function is strictly negative and does not cross zero on the interval $(0,1.45)$ to derive our error estimate.}
\end{The}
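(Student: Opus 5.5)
Starting from the Lagrange-type bound \eqref{eq:restglied}, it suffices to show a geometric bound of the form
\[
\sup_{t\in[0,1]}\frac{|\partial_H^n\psi(t;\tilde H,z)|}{n!}\leq C\,\rho^{-n}\,q^{\,n}
\]
uniformly in $\tilde H\in[H\wedge H_0,H\vee H_0]$. Here $\rho=\frac{H\wedge H_0}{2}+\frac14$, and $q$ is built from $r_1,r_2$. Plugging this into \eqref{eq:restglied} with $n=N+1$ gives the claim, after absorbing the $1/N$ factor and the $\sqrt2$ into the constant. Analyticity then follows because the right-hand side tends to zero whenever $|H-H_0|<\rho/\sqrt{(r_1+r_2)^2+r_1^2/2}$. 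Every $H_0\in(-1/2,0.9)$ has such a neighbourhood, and that is real analyticity.

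\textbf{Step 1: kernel estimates.} I will bound $\int_0^t|\partial^k_HK^{\tilde H}(s)|/k!\,ds$ via \eqref{eq:diffkernel}. Since $\gamma<0$ on $(0,1.45)$, the arguments $\log s-\gamma(\tilde H+1/2)$ and $-\gamma^{(j)}$ combine so that one can write $\partial_H^kK^H/k!$ as the $k$-th Taylor coefficient of $\varepsilon\mapsto s^{\varepsilon}/\Gamma(H+\varepsilon+1/2)\cdot s^{H-1/2}$. The factor $1/\Gamma$ is entire. The factor $s^\varepsilon$ contributes $|\log s|^k/k!$, and $\int_0^1 s^{\alpha-1}|\log s|^k/k!\,ds=\alpha^{-k-1}$ with $\alpha=\tilde H+1/2$. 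A Cauchy estimate for $1/\Gamma$ on a circle of radius about $\alpha/2$ should then give $\int_0^1|\partial^k_HK^{\tilde H}|/k!\leq c\,\rho^{-k}$. The choice $\rho=\alpha/2$ corresponds to $\frac{H}{2}+\frac14$, which motivates the radius in the statement.

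\textbf{Step 2: a priori bounds on $\psi$ and the linear resolvent.} I will bound $\sup_t|\psi(t;\tilde H,z)|$ and $|F(z,\psi)|,|F_x(z,\psi)|$ polynomially in $|b|$. For $z\in\C^*$ one has $\Re\psi\leq0$, and a Gronwall argument yields growth like $|b|^2$ for $F$ and $F_x$-type quantities. I will also bound the resolvent of the linear Volterra operator $u\mapsto\int_0^tK^{\tilde H}(t-s)F_x(z,\psi)u\,ds$ in sup-norm, using the estimates from the appendix on Volterra equations. These two ingredients produce the constants $r_1\sim1+|b|^4$ (resolvent times $F_x$) and $r_2\sim1+|b|^8$ (resolvent times $F$, squared growth).

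\textbf{Step 3: quadratic recurrence.} Let $c_n:=\sup_t|\partial^n_H\psi|/n!$. From \eqref{eq:ordern}, \eqref{eq:an} and Steps 1--2,
\[
c_n\leq r_2\rho^{-n}+r_1\sum_{k=1}^{n-1}\rho^{-k}c_{n-k}+\frac{\nu^2}{2}r\sum_{k=0}^{n-2}\rho^{-k}\sum_{j=1}^{n-k-1}c_jc_{n-k-j}.
\]
Setting $d_n=\rho^nc_n$ removes $\rho$. I will then dominate $d_n$ by the coefficients of the generating function $D(x)$ solving the quadratic equation
\[
D=\frac{r_2x}{1-x}+\frac{r_1x}{1-x}D+\frac{\nu^2r}{2(1-x)}D^2 .
\]
This is the quadratic recurrence treated in the appendix. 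Its solution is explicit via a square root, whose branch point determines the geometric rate $q=\sqrt{(r_1+r_2)^2+r_1^2/2}$. Standard singularity analysis, or an explicit Catalan-type bound, gives $d_n\lesssim q^n/n^{3/2}$. This is where the $1/N$ factor arises.

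\textbf{Main obstacle.} I expect the hardest part to be Step 1 together with the uniformity in $\tilde H$: obtaining a clean $\rho^{-k}$ bound for the integrated absolute Bell-polynomial kernels, rather than one with factorial loss. I would first try the Cauchy-integral representation in $\varepsilon$ described above. The second delicate point is matching the exact constants of the quadratic recurrence's branch point to the stated expression. If that matching fails, I would settle for a bound of the same form with slightly worse constants.
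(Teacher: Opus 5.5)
Your plan is correct and has the same skeleton as the paper's proof: reduce via \eqref{eq:restglied}, apply Lemma~\ref{lem:linearvolterrabound} to the linear equation for $\psi_n=\partial_H^n\psi/n!$, estimate the kernels, and dominate by a quadratic recurrence solved through its generating function. Your kernel estimate, however, is a genuinely different and cleaner route.

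The paper bounds the Bell polynomial in \eqref{eq:diffkernel} using $|\gamma^{(k)}(\alpha)|\le k!\,\alpha^{-k-1}$ and the negativity of the digamma function. That negativity is exactly what forces the restriction to $(-1/2,0.95)$. It then integrates $s^{\alpha-1}(1-\log s-\gamma(\alpha))^n$ as an incomplete Gamma function, obtaining a bound of order $\alpha^{-n-1}$, where $\alpha=H+\frac12$.

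Your route uses the expansion $\partial_H^kK^H(s)/k!=s^{\alpha-1}\sum_{j\le k}\frac{(\log s)^j}{j!}\frac{(1/\Gamma)^{(k-j)}(\alpha)}{(k-j)!}$. Combined with $\int_0^1 s^{\alpha-1}|\log s|^j\,ds=j!\,\alpha^{-j-1}$ and Cauchy's estimate on $|w-\alpha|=R$, it gives $\int_0^1|\partial_H^kK^H|/k!\le M(R)\sum_{j\le k}\alpha^{-j-1}R^{j-k}$, with $M(R)=\max_{|w-\alpha|=R}|1/\Gamma(w)|$.
\begin{itemize}
\item The $j!$ cancels exactly, so the factorial loss you list as the main obstacle does not occur.
\item Any $R>\alpha$ even yields the rate $\alpha^{-k}$, uniformly for $\widetilde H$ in compacts.
\item The argument never uses the sign of $\gamma$, so it works on all of $(-1/2,\infty)$, as the footnote anticipates.
\item The price is the less explicit constant $M(R)$.
\end{itemize}

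Some bookkeeping needs adjusting.

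\emph{The factor $\tfrac12$ in the radius.} In the paper, the $\frac12$ in $\frac{H\wedge H_0}{2}+\frac14$ comes from the $2^n$ in Lemma~\ref{lem:recurrence}, not from the kernel step. With $R=\alpha/2$ you therefore pay that factor twice. This is harmless only because $r_1,r_2$ are fixed merely up to $H$-dependent constants; take $R>\alpha$ instead.

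\emph{The growth rate.} The expression $\sqrt{(r_1+r_2)^2+r_1^2/2}$ is not a branch point. In the notation of Lemma~\ref{lem:recurrence}, set $r_1=\alpha_1=1+K\beta$ and $r_2=2\alpha_2c_1=\nu^2K^2\gamma$; so the $|b|^8$ is the quadratic coefficient times the forcing, not ``resolvent times $F$''. The smaller zero $x_-$ of $(1-\alpha_1x)^2-4\alpha_2c_1x$ satisfies $x_-^{-1}=(\sqrt{\alpha_2c_1}+\sqrt{\alpha_2c_1+\alpha_1})^2\le 2(r_1+r_2)$. This is below the paper's rate $2\sqrt{(r_1+r_2)^2+r_1^2/2}$, so your route gives an estimate at least as strong. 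Bound the coefficients of $\sqrt{(1-x/x_-)(1-x/x_+)}$ by Catalan numbers rather than using a transfer theorem, so that the prefactor stays free of $b$.

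\emph{The resolvent bound in Step~2.} What keeps the resolvent factor $1+\beta_H(b)/\Gamma(H+3/2)$ polynomial in $|b|$ is the dissipativity $\Re F_x(z,\psi)\le|\rho|\nu-\lambda\le 0$. This follows from $\Re\psi\le0$, $a\in[0,1]$ and $\lambda\ge\nu$, and is fed into Lemma~\ref{lem:linearvolterrabound} with parameter $0$. A plain Gronwall bound using $|F_x|\le\beta_H(b)$ would instead give Mittag-Leffler, hence super-polynomial, growth in $|b|$. Note also that $|F(z,\psi)|\lesssim 1+|b|^4$, not $|b|^2$.
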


It follows from Theorem \ref{thm:errorbound} and the identity theorem for real analytic functions that if the Taylor series converges for given $H,H_0,z$, then we have the equality $P_{\infty,H_0}(t;H,z)=\psi(t;H_0,z)$ for all $t\in[0,1]$. Hence, we can use the Taylor polynomial $P_{N,H_0}$ to approximate $\psi$. 

On the other hand, the estimate in Theorem \ref{thm:errorbound} is clearly suboptimal and does therefore not allow us to compute the convergence radius. The reason is that we rely on the recursive structure of the Volterra equations to derive the error bound and that we pull the absolute value inside the integral. This is suboptimal, since $\C\ni H\mapsto \int_0^t \frac{|\partial_H^nK^H(s)|}{n!}ds$ is not an entire function, even though for every $s>0$, the function $\C\ni H\mapsto \partial _H^nK^H(s)$ is entire and has thus infinite radius of convergence. However, as we will see in Section \ref{sec:numerics}, the radius of convergence can be analyzed numerically, cf.~Figure \ref{fig:ratio_heatmap}. We note that both, the theoretical error estimate as well as the numerical results in Section \ref{sec:numerics} reveal an inverse dependence of the convergence radius on $|b|$.

\begin{proof} The proof of the error estimate will be split in three steps: First, we apply a fractional Gronwall inequality to reduce the problem to the estimation of the inhomogeneous part. Second, we bound the fractional logarithmic kernels and third, we solve a quadratic recurrence equation. 
\begin{enumerate}[leftmargin=*]
    \item
By Theorem \ref{thm:ndif}, the function $$\psi_n(\cdot;H,z):=\frac{\partial^n_H\psi(\cdot;H,z)}{n!}$$ is the unique, continuous solution to the linear Volterra equation 
   \begin{equation*}%
    \psi_n(t;H,z)
        =\int_0^t (t-s)^{H-1/2}F_x(z,\psi(s;H,z))\psi_n(s;H,z)ds+a_n(t;H,z),\quad t\in[0,1].
\end{equation*} 
It follows from Lemma \ref{lem:boundpsi} that $\Re(\psi(s;H,z))\leq 0$ for all $s,H,z$ and hence  also
\[\Re\left(F_x(z,\psi(s;H,z))\right)\leq \nu-\lambda+\nu^2\Re\left(\psi(s;H,z)\right)\leq 0.\]
Therefore, Lemma \ref{lem:linearvolterrabound} implies that for all $t\in[0,1]$, 
\begin{equation}\label{eq:apriori}
\begin{split}
\left|\psi_n(t;H,z)\right|&\leq \left|a_n(t;H,z)\right|+\sup_{s\leq t}\left|a_n(s;H,z)F_x(z,\psi(s;H,z))\right|\int_0^tE_0(s)ds\\
   &\leq \left(1+\frac{t^{H+1/2}}{\Gamma(H+3/2)}\sup_{s\leq t}|F_x(z,\psi(s;H,z))|\right)\sup_{s\leq t}|a_n(s;H,z)|,
   \end{split}
\end{equation}
where $E_0$ is the canonical resolvent of $K^H$ with parameter $0$, which actually equals $K^H$ itself, i.e.~$E_0=K^H$.
Let us denote for $b\in\R$,
\begin{align*}
\beta_H(b)&:=|\rho|\nu (1+|b|)+\lambda+\nu^2 C_{1,H}\left(\frac{1}{2}+b^2\right),\\
\gamma_H(b)&:=\left(\frac{1}{2}+(\lambda+|\rho|\nu(1+|b|)) C_{1,H}\right)\left(\frac{1}{2}+b^2\right)+\frac{\nu^2}{2}C_{1,H}^2\left(\frac{1}{2}+b^2\right)^2,
\end{align*}
where $C_{1,H}$ is the constant from Lemma \ref{lem:boundpsi}.  
Then for all $H\in(-1/2,\infty)$ and $b\in\R$, %
\[\sup_{s\in[0,1]}\left| F\left(z,\psi(s;H,z)\right)\right|\leq\gamma_H(b),\qquad \sup_{s\in[0,1]}\left|F_x(z,\psi(s;H,z))\right|\leq\beta_H(b).\]
Denoting for all $n\in\N$, 
\begin{align*}
c_n(H,z)&:=\sup_{t\in[0,1]}|\psi_n(t;H,z)|,
\end{align*}
we obtain from \eqref{eq:an} that for all  $t\in[0,1]$,
\begin{align*}
|a_n(t;H,z)|&\leq \gamma_H(b)\int_0^t\frac{|\partial_H^nK^H(t-s)|}{n!}ds+\beta_H(b)\sum_{k=1}^{n-1}c_{n-k}(H,z)\int_0^t\frac{|\partial_H^kK^H(t-s)|}{k!}ds\\
&\qquad+\frac{\nu^2}{2}\sum_{k=0}^{n-2}\int_0^t\frac{|\partial_H^kK^H(t-s)|}{k!}ds\sum_{j=1}^{n-k-1}c_j(H,z)c_{n-k-j}(H,z).
\end{align*}

\item Next, we estimate for all $n\in\N$ the integral
\begin{align*}
I_n(t)&:=\int_0^t\left|\partial^n_HK^H(s)\right|ds\\
&=\int_0^t\left|B_n\left(\log(s)-\gamma(H+1/2),-\gamma^{(1)}(H+1/2),\dots,-\gamma^{(n-1)}(H+1/2)\right)\right|K^H(s)ds.    
\end{align*}
By definition of the $n$-th complete Bell polynomial, we have for all $x_1,\dots,x_n\in\R$ and $a\geq 1$,
\[|B_n(x_1,\dots,x_n)|\leq B_n(|x_1|,\dots,|x_n|)\leq a^nB_n\left(|x_1|,\frac{|x_2|}{a^2},\dots,\frac{|x_n|}{a^n}\right).\]
Moreover, if $|x_k|\leq k!\delta^k|x_1|$ with $\delta|x_1|<1$, then we have the estimate
\[B_n(|x_1|,\dots,|x_n|)\leq (1+|x_1|)^n.\]
Since
\[\gamma^{(k)}(H+1/2)=(-1)^{k+1}k!\cdot \zeta(k+1,H+1/2),\]
where $\zeta$ denotes the Hurwitz zeta function, 
we have for $k\geq 2$ the bound
\[|\gamma^{(k)}(H+1/2)|\leq k! \left(\frac{1}{2(H+1/2)^{k+1}}+\frac{1}{k(H+1/2)^k}\right)\leq \frac{k!}{(H+1/2)^{k+1}}.\]
Note that $\gamma(H+1/2)<-1/2$ for all $H\in(-1/2,0.95)$. Choosing $a=\delta^{-1}(H+1/2)^{-1}$ for some $\delta\in(0,H+1/2)$, we get
\begin{align*}
    \frac{|\gamma^{(k-1)}(H+1/2)|}{a^k}\leq (k-1)!\delta ^k\leq -k!\delta^k(\log(s)+\gamma(H+1/2))
\end{align*}
for all $s\in(0,1]$ and thus
\begin{align*}
    &\left|B_n\left(\log(s)-\gamma(H+1/2),-\gamma^{(1)}(H+1/2),\dots,-\gamma^{(n-1)}(H+1/2)\right)\right|\\
    &\qquad \qquad\leq \delta^{-n}(H+1/2)^{-n}(1-\log(s)-\gamma(H+1/2))^n.
\end{align*}
Now, we have for all $t>0$, 
\begin{align*}
\int_0^ts^{H-1/2}(1-\log(s))^nds%
&=\frac{e^{H+1/2}}{(H+1/2)^{n+1}}\Gamma\left(n+1,(H+1/2)(1-\gamma(H+1/2)-\log(t))\right).%
\end{align*}
Since $\Gamma(n+1,x)\leq\Gamma(n+1)=n!$ for all $x>0$, we obtain for all $t\in[0,1]$
\begin{align*}
|a_n(t;H,z)|&\leq\frac{e^{H+1/2}\gamma_H(b)}{(H+1/2)^{n+1}}+e^{H+1/2}\beta_H(b)\sum_{k=1}^{n-1}\frac{c_{n-k}(H,z)}{(H+1/2)^{k+1}}\\
&\qquad\qquad\qquad+e^{H+1/2}\frac{\nu^2}{2}\sum_{k=0}^{n-2}\sum_{j=1}^{n-k-1}\frac{c_j(H,z)c_{n-k-j}(H,z)}{(H+1/2)^{k+1}}.
\end{align*}
Plugging this into the a-priori estimate \eqref{eq:apriori} and multiplying both sides with $(H+1/2)^n$, we obtain the recursive inequality
\begin{align*}
c_n(H,z)(H+1/2)^n&\leq
 K(b)\left[\gamma_H(b)+\beta_H(b)\sum_{k=1}^{n-1}c_{k}(H,z)(H+1/2)^{k}\right.\\
 &\qquad\qquad\qquad\left.+\frac{\nu^2}{2}\sum_{k=2}^n\sum_{i,j\geq1:\ i+j=k}c_i(H,z)c_j(H,z)(H+1/2)^{i+j}\right]
\end{align*}
with
\[K_H(b):=1+\frac{\beta_H(b)}{\Gamma(H+3/2)}.\]

\item For ease of notation, we suppress the dependence on $H$ and $z=a+ib$ in the following. %
The last inequality implies that for all $n\in\N$,
$$c_n\leq c'_n\left(H+\frac{1}{2}\right)^{-n},$$ 
where the series $(c'_n)_{n\in\N}$ satisfies 
\begin{align*}
c'_1&:=K\gamma,\\
    c'_n&:=K\gamma+K\beta\sum_{k=1}^{n-1}c'_{k}+\frac{K\nu^2}{2}\sum_{i,j\geq1:\ i+j\leq n} c'_ic'_j\qquad\text{for}\qquad n\geq2.
\end{align*}
Therefore
\[c_n'-c_{n-1}'=K\left(\beta c'_{n-1}+\frac{\nu^2}{2}\sum_{i=1}^{n-1}c'_ic'_{n-i}\right),\quad n\geq 2,\]
and $(c'_n)_{n\in\N}$ follows the quadratic recursion
\[c_n'=\alpha_1c'_{n-1}+\alpha_2\sum_{i=1}^{n-1}c'_ic'_{n-i},\quad n\geq 2,\]
with $$\alpha_1:=1+K\beta\qquad\text{and}\qquad\alpha_2:=\frac{K\nu^2}{2}.$$ 
Applying Lemma \ref{lem:recurrence} we obtain for $n\geq2$ that 
\begin{align*}
    c_n\leq \frac{2^n(\alpha_1+2K\alpha_2\gamma)^n}{2(n-1)}\cdot\left(1+\frac{\alpha_1^2}{2(\alpha_1+2K\alpha_2\gamma)^2}\right)^{n/2}\left(H+\frac{1}{2}\right)^{-n},
\end{align*}
which implies for all $H,H_0\in(-1/2,0.95)$ and $n\geq2$,
\begin{align*}
&\sup_{t\in[0,1],\tilde{H}\in[H\wedge H_0,H\vee H_0]}\left|\partial_H^{N+1}\psi_n\left(t;\tilde{H},z\right)\right|\cdot |H-H_0|^{N+1}\\
&\leq \frac{1}{2N}\left(\frac{H\wedge H_0}{2}+\frac{1}{4}\right)^{-N-1}\left|H-H_0\right|^{N+1}\cdot \\
&\quad\left((1+K_{H\wedge H_0}(b)\beta_{H\wedge H_0}(b)+\nu^2K^2_{H\wedge H_0}(b)\gamma_{H\wedge H_0}(b))^2+\frac{(1+K_{H\wedge H_0}(b)\beta_{H\wedge H_0}(b))^2}{2}\right)^{(N+1)/2}.
\end{align*}
Thus, the error estimate follows from \eqref{eq:restglied}.
\end{enumerate}
\end{proof}

\begin{Cor}
For every $z\in\C^*$, the characteristic function $\varphi^H(z)$ of the log-stock price $X_T$ is real analytic in $H$ on $(-1/2,0.95)$.    
\end{Cor}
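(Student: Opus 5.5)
We will deduce the corollary from Theorem \ref{thm:errorbound} via the affine transform formula \eqref{eq:charfct}, with $T=1$. Fix $z\in\C^*$ and write
\[
\Phi(H):=\int_0^1 F\left(z,\psi(1-t;H,z)\right)g^H(t)\,dt,\qquad \varphi^H(z)=\exp(\Phi(H)).
\]
Since the composition of the entire function $\exp$ with a real analytic function (taking values in $\C\cong\R^2$) is real analytic, it suffices to show that $H\mapsto\Phi(H)$ is real analytic on $(-1/2,0.95)$.

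\textbf{Step 1: extend $\psi$ holomorphically in $H$.} Fix $H_0\in(-1/2,0.95)$. Theorem \ref{thm:errorbound} gives bounds on $\sup_{t}|\partial_H^n\psi(t;H_0,z)|/n!$ from its proof. These bounds are geometric in $n$: they are of order $C\rho^{-n}$ with $\rho>0$ depending only on $H_0$ and $b$, and they are uniform for $H_0$ in a compact subinterval. Hence the series
\[
\sum_n \partial_H^n\psi(t;H_0,z)\frac{(w-H_0)^n}{n!}
\]
converges uniformly in $t\in[0,1]$ and in complex $w$ with $|w-H_0|<\rho$. It therefore defines a function $\Psi(t,w)$, holomorphic in $w$ and continuous in $(t,w)$. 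By the remainder estimate, $\Psi(t,H)=\psi(t;H,z)$ for real $H$ near $H_0$, uniformly in $t$. In particular, on a small real interval $J$ around $H_0$, $\psi(t;H,z)=\sum_n c_n(t)(H-H_0)^n$ with $\sup_t|c_n(t)|\le C\rho^{-n}$.

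\textbf{Step 2: analyticity of the integrand factors.} The map $F(z,\cdot)$ is a polynomial, so $F(z,\Psi(1-t,w))$ is holomorphic in $w$, jointly continuous, and bounded uniformly in $t$ on a closed disc of radius $\rho'<\rho$. Next,
\[
g^w(t)=V_0+\theta\,\frac{e^{(w+1/2)\log t}}{\Gamma(w+3/2)}
\]
is holomorphic in $w$ for $\Re w>-1/2$ and each $t\in(0,1]$. On a small disc around $H_0$ it satisfies $|g^w(t)|\le V_0+C t^{\Re w+1/2}\le C'$, since $\Re w+1/2>0$ there and $1/\Gamma$ is locally bounded.

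\textbf{Step 3: holomorphy under the integral.} The function
\[
w\mapsto\int_0^1 F\left(z,\Psi(1-t,w)\right)g^w(t)\,dt
\]
is holomorphic on the small disc. The integrand is holomorphic in $w$ for each $t$ and uniformly bounded, so this follows from Morera's theorem combined with Fubini, or equivalently from dominated differentiation under the integral. Its restriction to real $w$ is $\Phi$, so $\Phi$ is real analytic near $H_0$. Since $H_0$ was arbitrary, $\Phi$ is real analytic on $(-1/2,0.95)$.

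The main obstacle is Step 1. Theorem \ref{thm:errorbound} is phrased as a real remainder estimate, so one has to extract from its proof the uniform-in-$t$ geometric bound on the coefficients $\partial_H^n\psi(t;H_0,z)/n!$, which is what permits complex extension. A more elementary alternative avoids complexification. Expand $g^H(t)$ directly as a power series in $H-H_0$, namely $t^{H_0+1/2}\sum_k (\log t)^k (H-H_0)^k/k!$ times the analytic factor $1/\Gamma(H+3/2)$. Multiply this with the power series of $F(z,\psi)$, whose coefficients are bounded uniformly in $t$. Then integrate termwise using $\int_0^1 t^{H_0+1/2}|\log t|^k\,dt=k!/(H_0+3/2)^{k+1}$, which gives a positive radius of convergence.
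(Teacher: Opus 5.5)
Your proof is correct. It shares the paper's first reduction (composing with $\exp$, so that only the analyticity of $H\mapsto\int_0^T F(z,\psi(T-t;H,z))g^H(t)\,dt$ matters), but it handles the parameter integral by a different mechanism.

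The paper stays on the real line. It uses dominated convergence to differentiate under the integral sign, with the domination taken from step (2) of the proof of Theorem~\ref{thm:errorbound}. What that sketch leaves implicit is the factorial-geometric bound on $\partial_H^n$ of the full integrand, which is what actually upgrades smoothness to analyticity. This includes the $H$-dependence of $g^H$, whose derivatives produce unbounded but integrable $\log^k t$ factors.

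Your complexification makes that step automatic. Once $\psi(t;\cdot,z)$ extends holomorphically to a disc with a bound uniform in $t$, the polynomial $F$, the entire factor $w\mapsto g^w(t)$ and the parameter integral (Morera plus Fubini) all preserve holomorphy. Cauchy's estimates then give the derivative bounds for free. Your elementary variant does the same bookkeeping by hand: the identity $\int_0^1 t^{H_0+1/2}|\log t|^k\,dt=k!/(H_0+3/2)^{k+1}$ is precisely the integrable domination the paper alludes to.

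What you call the main obstacle is lighter than you think: the geometric coefficient bound follows from the statement of Theorem~\ref{thm:errorbound} alone, without going into its proof. Take $H=H_0+h$ with small $h>0$, so that $H\wedge H_0=H_0$. Then $\partial_H^N\psi(t;H_0,z)\,h^N/N!=R_{N-1,H_0}(t;H,z)-R_{N,H_0}(t;H,z)$. The stated remainder bound therefore gives $\sup_{t}|\partial_H^N\psi(t;H_0,z)|/N!\le C A^N$, with $A$ depending only on $H_0$ and $b$.
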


\begin{proof}
As the exponential function is analytic and the decomposition of analytic functions is again analytic, it suffices to prove that 
\[H\mapsto\int_0^TF(z,\psi(T-t;H,z))g^H(t)dt\]
is real analytic. Because of Theorem \ref{thm:errorbound}, this will be the case if we are allowed to interchange integration and differentiation. However, this follows from the dominated convergence theorem as $\left|\partial_H^n\psi(s;H,z)\right|$ is continuous in $H$ and hence uniformly bounded by an integrable function on any compact subinterval of $(-1/2,0.95)$ as shown in step (2) in the proof of Theorem \ref{thm:errorbound}.
\end{proof}

In Theorem \ref{thm:errorbound} we derived an error bound uniformly for all $t\in[0,1]$. If one keeps track of $t$ throughout the derivation, one can study the small time asymptotics as $t\to0$. This is the content of the following corollary. 

\begin{Cor}\label{cor:smalltime}
Let $z=a+ib\in \C^*$ and $H,H_0\in(-1/2,\infty)$. Then for all $N\in\N$ and $M>0$,
\begin{align*}
    \lim_{t\to0}\sup_{z\in\C^*_M}\left|R_{N,H_0}(t;H,z)\right|\leq  \frac{1}{\sqrt{2}N}\left(\frac{3}{2}\right)^{(N+1)/2}\left(\frac{H\wedge H_0}{4} +\frac{1}{8}\right)^{-N-1} \left|H-H_0\right|^{N+1}.
\end{align*}
Hence, for any $M>0$, $H_0\in(-1/2,\infty)$, and $H\in(-1/2,\infty)$ satisfying
\[|H-H_0|<\sqrt{\frac{2}{3}}\left(\frac{H_0}{4}+\frac{1}{8}\right),\]
there exists $t_0>0$ such that $P_{\infty,H_0}(t;H,z)=\psi(t;H,z)$ for all $t\in[0,t_0)$ and $z=a+ib\in \C^*_M$, where $\C^*_M:=\{z=a+ib\in\C^*:|b|\leq M\}.$ 
\end{Cor}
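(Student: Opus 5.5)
The plan is to rerun the three-step proof of Theorem \ref{thm:errorbound}, now keeping every factor of $t$ explicit, and then send $t\to0$. In that limit all terms carrying $|b|$ (through $\beta_H(b)$, $\gamma_H(b)$) should disappear. The uniformity in $z\in\C^*_M$ comes from the fact that for $|b|\le M$ the quantities $\beta_H(b)$, $\gamma_H(b)$ and $C_{1,H}$ are bounded by constants depending only on $M$ and $H$.

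\textbf{Step 1 (a priori estimate on $[0,t]$).} I will work with $c_n(t):=\sup_{s\le t}|\psi_n(s;H,z)|$. Lemma \ref{lem:linearvolterrabound} together with $\Re F_x\le0$ gives
\[
c_n(t)\le\Bigl(1+\tfrac{t^{H+1/2}}{\Gamma(H+3/2)}\beta_H(b)\Bigr)\sup_{s\le t}|a_n(s)|=:K(t)\sup_{s\le t}|a_n(s)|,
\]
with $K(t)\to1$ uniformly on $\C^*_M$. Two further facts go into the estimate of $a_n$.
\begin{itemize}
\item Since $\psi(s)=O(s^{H+1/2})$ uniformly on $\C^*_M$ (from the Riccati equation and the bound on $F$), we have $F(z,\psi(s))$ bounded by $\gamma_H(b)$ and $F_x$ bounded by $\beta_H(b)$.
\item Each kernel integral $\int_0^t|\partial^k_HK^H|/k!$ carries a factor that vanishes with $t$.
\end{itemize}

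\textbf{Step 2 (kernel bound).} Using $\int_0^t s^{H-1/2}(1-\log s)^k\,ds$ expressed through the incomplete Gamma function, I get
\[
\int_0^t\frac{|\partial_H^kK^H(s)|}{k!}\,ds\le \frac{e^{H+1/2}\,\Gamma\bigl(k+1,(H+1/2)(1-\gamma-\log t)\bigr)}{k!\,\delta^k(H+1/2)^{2k+1}\Gamma(H+1/2)}.
\]
Here I will not let $\delta$ tend to $H+1/2$. Instead I take $\delta=(H+1/2)/2$, which halves the effective radius relative to the theorem. This is exactly where the constant $\frac{H\wedge H_0}{4}+\frac18=\frac12(\frac{H\wedge H_0}{2}+\frac14)$ should come from; the choice is also needed to make the Bell-polynomial bound work for $H>0.95$, where $\gamma$ may be positive. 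The key observation is that $\Gamma(k+1,x)/k!\to0$ as $x\to\infty$, and $x\to\infty$ as $t\to0$, for each fixed $k$.

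\textbf{Step 3 (recurrence).} The recurrence for $c_n'$ in the proof of Theorem \ref{thm:errorbound} has coefficients $K\gamma$, $K\beta$ and $K\nu^2/2$. With $t$ tracked, the terms multiplied by $\gamma_H(b)$ and $\beta_H(b)$ become $o(1)$ as $t\to0$. What remains is a normalized recursion with $\alpha_1\to1$ plus vanishing quadratic contributions.

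This requires care, because the lowest-order coefficients $c_j$ themselves also tend to $0$ with $t$. So I would argue by induction on $n\le N+1$, with $N$ fixed, that $\limsup_{t\to0}c_n(t)(\delta(H+1/2))^n$ is bounded by the solution of the limiting recurrence. Lemma \ref{lem:recurrence} with $\alpha_1=1$ and $\gamma$-dependent terms set to $0$ then yields the factor $\frac{2^n}{2(n-1)}(1+\frac12)^{n/2}$, which is where the $(3/2)^{(N+1)/2}$ arises. The sup over $\tilde H\in[H\wedge H_0,H\vee H_0]$ is handled by monotonicity in $H$ of the bounds (worst case at $H\wedge H_0$), and the remainder estimate \eqref{eq:restglied} gives the stated limit.

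\textbf{Second claim.} For $|H-H_0|$ below the stated threshold, the geometric factor $\bigl(\sqrt{3/2}\,|H-H_0|/(\tfrac{H\wedge H_0}{4}+\tfrac18)\bigr)^{N+1}$ goes to zero as $N\to\infty$. The threshold involves $H_0$ rather than $H\wedge H_0$; if $H<H_0$ one uses the $H\wedge H_0$ version, and the condition should be adjusted accordingly. Combined with local analyticity and the identity theorem, this gives $P_{\infty,H_0}=\psi$ for small $t$.

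\textbf{Main obstacle.} The hard part is the uniformity in $t$: interchanging $\lim_{t\to0}$ with the bounds so that a single $t_0$ works for all $N$ at once. I would first try to show that the $t$-dependent recurrence coefficients are monotone in $t$. Then, for $t\le t_0$, every $c_n(t)$ is dominated by the recurrence with coefficients frozen at $t_0$, and these frozen coefficients can be chosen arbitrarily close to the limiting ones, which preserves the strict inequality in the radius condition.
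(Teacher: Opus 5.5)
There is a genuine gap, and it sits exactly at what you call the main obstacle. Your fixed-$N$ induction in Step~3 is fine, but all it shows is that each $c_n(t)\to0$ as $t\to0$. In particular the left-hand side of the first display is actually $0$, so that part carries no information. The content of the corollary is a single $t_0$ that works for all $N$. This requires a bound $c_n(t)\le C(t)\,q(t)^n$ with $q(t)$ independent of $n$ and $\limsup_{t\to0}q(t)\le\sqrt{3/2}\,\bigl(\frac{H\wedge H_0}{4}+\frac18\bigr)^{-1}$. The smallness you use, $\Gamma(k+1,x)/k!\to0$ for each fixed $k$, cannot give this. For fixed $x$ (i.e.\ fixed $t_0>0$) one has $\Gamma(k+1,x)/k!=e^{-x}\sum_{j\le k}x^j/j!\to1$ as $k\to\infty$. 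So the recurrence coefficients frozen at $t_0$ are not close to the limiting ones for the high-order kernel terms. The best constant-coefficient recurrence you can feed into Lemma~\ref{lem:recurrence} is then the one from Theorem~\ref{thm:errorbound}, with its $|b|$-dependent $\alpha_1,\alpha_2,c_1'$. Monotonicity in $t$ does not touch this, because the problem is uniformity in the order $k$, not in $t$.

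The missing idea is to trade a geometric factor for smallness that is uniform in $k$. The paper writes $\Gamma(k,x)=e^{-x}\int_0^\infty(u+x)^{k-1}e^{-u}\,du\le2^{k-2}\bigl[\Gamma(k)+x^{k-1}\bigr]e^{-x}$. Combined with $\sup_k x^{k}/k!\sim e^x/\sqrt{2\pi x}$, this gives $\Gamma(k,x)/\Gamma(k)\le2^{k+1}D(x)$ with $D(x)\to0$ as $x\to\infty$. The estimate inserts a single, $n$-independent parameter $\delta(H,t)=D(-(H+\tfrac12)\ln t)\to0$ into the recurrence: $\alpha_1=1+2K\delta\beta$, $\alpha_2=K\delta\nu^2$, $c_1'=2K\gamma\delta$. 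Lemma~\ref{lem:recurrence} then gives a geometric bound valid for all $n$ and all $t<t_0$, with ratio tending to $\sqrt{3/2}\,(\frac H4+\frac18)^{-1}$.

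The factor $2^k$ is what turns $H+\frac12$ into $\frac H2+\frac14$, and the $2^n$ of Lemma~\ref{lem:recurrence} then produces $\frac H4+\frac18$. The halving does not come from taking the Bell-polynomial parameter equal to $(H+\frac12)/2$; that choice only enlarges the bound and buys no uniform smallness. Your extra factor $2$ would do the job if spent here instead, since $\sup_k2^{-k}\Gamma(k+1,x)/k!\to0$ as $x\to\infty$ (split $k\le k_0$ and $k>k_0$).

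Finally, for $H<H_0$ you need not adjust the condition. The $H\wedge H_0$ version already gives convergence for $0\le H-H_0<\sqrt{2/3}\,(\frac{H_0}{4}+\frac18)$. Since the interval of convergence of a power series in $H-H_0$ is symmetric about $H_0$, analyticity and the identity theorem then give the stated symmetric range, as in the paper.
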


\begin{proof}
The proof follows along the same lines as the proof of Theorem \ref{thm:errorbound}, but we consider the $t$-dependent quantity
\[c_n(t;H,z):=\sup_{s\in[0,t]}|\psi_n(s;H,z)|\]
and replace the coarse, uniform estimate $\Gamma(k,x)/\Gamma(k)\leq 1$ in step (2) by the following $x$-dependent estimate:
for all $n\in\N$ and $x>0$,
\begin{align*}
\Gamma(k,x)=\int_x^\infty u^{k-1}e^{-u}du=e^{-x}\int_0^\infty(u+x)^{k-1}e^{-u}du\leq 2^{k-2}\left[\Gamma(k)+x^{k-1}\right]e^{-x}
\end{align*}
and therefore
\[\frac{\Gamma(k,x)}{\Gamma(k)}\leq 2^{k-2}\left[\frac{x^{k-1}}{(k-1)!}+1\right]e^{-x}\leq 2^{k-2}e^{-x}\cdot \sup_{k\in\N}\left[\frac{x^{k-1}}{(k-1)!}+1\right]=:D(x)2^{k+1},\]
where $D(x)\to0$ as $x\to\infty$ because
\[\sup_{k\in\N}\frac{x^k}{k!}=\frac{e^x}{\sqrt{2\pi x}}\left(1+\mathcal{O}\left(\frac{1}{x}\right)\right).\]
Denoting $\delta(H,t):=D(-(H+1/2)\ln(t))$, we have for all $n\geq 2$,
$$c_n\leq c'_n\left(\frac{H}{2}+\frac{1}{4}\right)^{-n},$$ 
where
\[c_n'=\alpha_1c'_{n-1}+\alpha_2\sum_{i=1}^{n-1}c'_ic'_{n-i},\quad n\geq 2,\]
with $$\alpha_1:=1+2K\delta\beta,\qquad\alpha_2:=K\delta\nu^2,\qquad c_1'=2K\gamma\delta.$$ 
Hence, we obtain for $n\geq2$ that 
\begin{align*}
    c_n\leq \frac{2^n(\alpha_1+4K\alpha_2\gamma\delta)^n}{2(n-1)}\cdot\left(1+\frac{\alpha_1^2}{2(\alpha_1+4K\alpha_2\gamma\delta)^2}\right)^{n/2}\cdot\left(\frac{H}{2}+\frac{1}{4}\right)^{-n}.
\end{align*}
Now the claimed asymptotics follows by observing that $\alpha_1\to1$ and $\alpha_2\to0$ as $t\to0$, uniformly in all $z\in\C^*_M$. Also note that as we consider the case $t\to 0$, we do not need a lower positive bound on $\gamma(H+1/2)$. Hence, the result holds for all $H,H_0\in(-1/2,\infty)$.

Finally, the Taylor series will converge on a small enough time horizon if 
\[\sqrt{\frac{3}{2}}|H-H_0|\leq \frac{H\wedge H_0}{4}+\frac{1}{8},\]
which is the case if and only if
\[-\left(\sqrt{\frac 3 2}+\frac 1 4\right)^{-1}\left(\frac{H_0}{4}+\frac{1}{8}\right)<H-H_0<\sqrt{\frac{2}{3}}\left(\frac{H_0}{4}+\frac{1}{8}\right).\]
As the interval of convergence must be symmetric around $H_0$, the claim follows. 
\end{proof}

Especially interesting is the choice $H_0=1/2$ as in this case the Volterra equations \eqref{eq:ordern} turn into standard linear ODEs. The above corollary implies that for $H_0=1/2$, the Taylor series $P_{\infty,1/2}(t;\cdot,z)$ has a convergence radius of at least $0.2$ for all $z\in\C^*$, if $t$ is small enough, where $t\to0$ as $|b|\to\infty$. Also for $T$ not converging to zero, the error bound of Theorem \ref{thm:errorbound} can be improved in the case $H_0=1/2$, since we have a better a priori bound on the solution of the standard Heston Riccati equation than on the solution of the fractional Riccati equation. 

\begin{Cor}
    For all $H\in(-1/2,\infty)$, $z=a+ib\in\C^*$, and $N\in\N$, we have
    \begin{align*}
    \sup_{t\in[0,1]}\left|R_{N,1/2}(t;H,z)\right|\leq \frac{2^{N+1}}{\sqrt{2}N}\left|H-H_0\right|^{N+1}\left(\left(r_1(b,1/2)+r_2(b,1/2)\right)^2+\frac{r^2_1(b,1/2)}{2}\right)^{(N+1)/2},
\end{align*}
where for $|b|\to\infty$,
\[r_1(b,1/2)\sim 1+|b|^2\qquad\text{and}\qquad r_2(b,1/2)\sim1+|b|^4.\]
\end{Cor}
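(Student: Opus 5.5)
The plan is to rerun the three-step proof of Theorem \ref{thm:errorbound} with $H\wedge H_0$ replaced by $H_0=1/2$. The only changes are in the constants $\beta,\gamma,K$ and in the factor $(H\wedge H_0/2+1/4)^{-1}$, which now becomes $2$ and produces the $2^{N+1}$.

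The key observation is that at $H_0=1/2$ the reference solution $\psi(\cdot;1/2,z)$ solves the classical Heston Riccati ODE. It therefore admits a sharper a priori bound than the general bound of Lemma \ref{lem:boundpsi}, which involves $C_{1,H}(1/2+b^2)$. Concretely, I would show that for the classical Heston Riccati solution we have $\Re\psi\le0$ and $\sup_{t\in[0,1]}|\psi(t;1/2,z)|\lesssim 1+|b|$, rather than $\lesssim 1+b^2$. This follows either from the explicit closed form $\psi=\frac{(\lambda-\rho\nu z-d)}{\nu^2}\frac{1-e^{-dt}}{1-ge^{-dt}}$ with $d=\sqrt{(\lambda-\rho\nu z)^2-\nu^2(z^2-z)}$, whose modulus grows like $|d|/\nu^2\sim|b|$, or from an energy argument on the ODE. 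Plugging this into the definitions of $\beta$ and $\gamma$ gives $\beta_{1/2}(b)\sim 1+|b|$ and $\gamma_{1/2}(b)\sim1+b^2$.

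Next I would repeat the a priori estimate \eqref{eq:apriori}. Its ingredients are $\Re F_x\le0$, which follows from $\Re\psi\le0$ and $\lambda\ge\nu$, and Lemma \ref{lem:linearvolterrabound}. Since the kernel at $H=1/2$ is $K^{1/2}\equiv1$, this yields $K_{1/2}(b)=1+\beta_{1/2}(b)\sim1+|b|$. The bound on the fractional-logarithmic kernel integrals from step (2) applies verbatim with $H+1/2=1$. Here $\gamma(1)=-\gamma_{\mathrm{EM}}<0$ gives the needed sign. The result is the recursion for $c_n=\sup_t|\psi_n(t;1/2,z)|$ with the factor $(H+1/2)^{-n}=1$.

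Lemma \ref{lem:recurrence} then gives $c_n\le \frac{2^n}{2(n-1)}\big((r_1+r_2)^2+r_1^2/2\big)^{n/2}$, where
\[
r_1=1+K\beta\sim1+|b|^2,\qquad r_2=\nu^2K^2\gamma\sim(1+|b|)^2(1+b^2)\sim1+|b|^4.
\]

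The main obstacle is the remainder itself. Formula \eqref{eq:restglied} requires $\sup_{\tilde H}|\partial_H^{N+1}\psi(t;\tilde H,z)|$ over the whole segment between $H$ and $1/2$, not just at $H_0=1/2$. At those $\tilde H$ the sharper Heston bound is unavailable. I would therefore avoid the Lagrange form and argue instead via analyticity, which is known from Theorem \ref{thm:errorbound} near $1/2$. Writing $R_{N,1/2}=\sum_{n>N}\psi_n(t;1/2,z)(H-1/2)^n$ gives a tail controlled only by the coefficients $c_n$ at $1/2$. The geometric tail sum then gives the stated bound, up to a constant that I absorb into the $\frac{1}{\sqrt2 N}$ prefactor, within the region where it is less than one. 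Identifying the series with $\psi$ on that region uses real analyticity and the identity theorem. This is the step I expect to need the most care, since the claim is stated for all $H\in(-1/2,\infty)$ and is only nontrivial where the right-hand side is small.
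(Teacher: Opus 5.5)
Your route is the paper's. Lemma~\ref{lem:boundHeston} replaces Lemma~\ref{lem:boundpsi}, giving $\beta_{1/2}\sim 1+|b|$ and $\gamma_{1/2}\sim 1+b^2$. The recursion of Theorem~\ref{thm:errorbound} is then rerun with $(H+1/2)^{-n}=1$ and $K_{1/2}=1+\beta_{1/2}$. Lemma~\ref{lem:recurrence} gives $c_n\le (2q)^n/(2(n-1))$ with $q:=\big((r_1+r_2)^2+r_1^2/2\big)^{1/2}$, $r_1=1+K\beta$, $r_2=\nu^2K^2\gamma$, exactly as you say.

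One sub-point on the linear a priori bound. The closed form only gives $|\psi|\lesssim 1+|b|$ once you know that $1-g(z)e^{-d(z)t}$ stays away from $0$ uniformly in $z\in\C^*$, $t\ge 0$. That is the real content of Lemma~\ref{lem:boundHeston}, so cite it. The obvious energy estimate for the ODE only yields $|\psi|\lesssim 1+b^2$.

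The gap is in your last step. You are right that \eqref{eq:restglied} needs $\sup_{\tilde H}|\partial_H^{N+1}\psi(t;\tilde H,z)|$ over the whole segment, where the Heston bound is unavailable. The paper disposes of this with one remark: since $\psi$ is analytic, it suffices to estimate $\psi_n(t;1/2,z)$. Yet it keeps the Lagrange form, which is where the prefactor $\sqrt2\cdot\frac{1}{2N}=\frac{1}{\sqrt2 N}$ comes from. Analyticity does not justify evaluating the Lagrange derivative at $H_0$.

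Your tail-sum replacement is the honest argument, but it does not yield the stated inequality. With $x:=2q|H-1/2|<1$ it gives
\[
\sup_{t\in[0,1]}|R_{N,1/2}(t;H,z)|\le\sum_{n\ge N+1}\frac{x^n}{2(n-1)}\le\frac{x^{N+1}}{2N(1-x)},
\]
whereas the claimed right-hand side is $x^{N+1}/(\sqrt2 N)$. The factor $1/(1-x)$ is not a constant: for fixed $N$ the middle sum grows like $\frac12\log\frac{1}{1-x}$ as $x\uparrow 1$. So it cannot be absorbed into $\frac{1}{\sqrt2N}$. Your bound implies the stated one uniformly in $N$ only for $x\le 1-1/\sqrt2$, and not at all as $x\uparrow1$.

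For $x\ge1$ your argument gives nothing. The same holds for $H$ beyond the range $(-1/2,0.95)$ where Theorem~\ref{thm:errorbound} provides the analyticity your identity-theorem step needs. Hence the assertion for all $H\in(-1/2,\infty)$ is not reached.

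What you actually prove is the bound $x^{N+1}/(2N(1-x))$ for $x<1$ on that common interval. Recovering the corollary verbatim would require bounding $\partial_H^{N+1}\psi(t;\tilde H,z)$ for $\tilde H\neq 1/2$ with the same constants. Neither your argument nor the paper's provides this.
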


\begin{proof}
    The proof follows along the same lines as the proof of Theorem \ref{thm:errorbound}, but uses the a priori bound established in Lemma \ref{lem:boundHeston}. Also, since we already know that the function is real analytic, it suffices to estimate $|\psi_n(t;H_0,z)|$ instead of $\sup_{\tilde H}|\psi_n(t;\tilde{H},z)|$. Hence, in this case we get the same recurrence relation, but with   
\begin{align*}
\beta_{1/2}(b)&:=|\rho|\nu (1+|b|)+\lambda+\nu^2 C\left(1+|b|\right),\\
\gamma_{1/2}(b)&:=\frac{b^2+|b|+1/4}{2}+(\lambda+|\rho|\nu(1+|b|))C\left(1+|b|\right)+\frac{\nu^2}{2}C^2\left(1+|b|\right)^2,
\end{align*}
where $C$ is now the constant from Lemma \ref{lem:boundHeston}.  
\end{proof}

\section{Numerical discussion}\label{sec:numerics}

We discuss the expansion numerically and test it on the pricing of European call options. To this end, we first describe the implementation of the expansion and the experimental setup used in the examples. The full code for the numerical experiments is available at \url{https://github.com/hagerpa/roughprix}.

\subsection{Implementation}\label{sec:implementation}{\ }

\textbf{Reference solution. }
We need numerical reference solutions to the fractional Riccati equation \eqref{eq:riccati} for different values of $H$ for two reasons:
Firstly, the expansion starts from a known solution $\psi^{H_0}$. 
Secondly, we need to compute benchmark values in order to compare the accuracy of our approximation.

The standard approach for approximating a solution to \eqref{eq:riccati} is to discretize the Volterra integral in time, often by a fractional Adams--Bashforth--Moulton predictor--corrector method; see, e.g., \cite{diethelm_ford_freed_2002,EER19}. As frequently reported in the literature, accurate option prices may require very fine time grids, often with thousands or even tens of thousands of time steps, especially for short maturities and large Fourier frequencies. Since the computational cost scales quadratically in the number of time steps, this quickly becomes infeasible, and several approaches have been proposed to accelerate the computation; see also the Introduction~\ref{sec:intro}.
For the benchmark computations in this paper, we found an implicit product-integration scheme described below most suitable.

While it is still within the spirit of our approach to compute a possibly expensive but accurate starting solution at $H_0$ and then obtain approximate solutions efficiently for nearby values of $H$ by expansion, the two expansion points used below admit more direct starting solutions. For $H_0=1/2$, the starting solution is the explicit classical Heston Riccati solution. For $H_0=0$, we instead use an accurate closed-form Pad\'e approximation. We describe both choices further below.

\textit{Benchmark scheme.} To compute independent benchmarks for validating the expansion results, we use a direct discretization of the fractional Riccati equation. Since the Adams-type scheme requires very fine time discretizations for $H<0$, as also reported in the literature \cite{BB23}, we use instead the simple product-integration scheme described in Section~\ref{sec:implicit_riccati_pi}. Compared with hybrid schemes such as \cite{callegaro_grasselli_pages_2020}, this remains a plain discretization of the fractional Riccati equation. The scheme treats the Riccati nonlinearity implicitly, and the resulting quadratic update is resolved by a continuous selection of the appropriate Riccati root. With this discretization, $2\cdot 10^3$ time steps were sufficient to obtain stable benchmark values for all Hurst parameters considered in the experiments, ranging from $H=-0.3$ to $H=0.4$.

\textit{Classic Heston.}
Firstly, the case $H_0 = 1/2$ reduces to the classical Heston case, and the solution of the Riccati equation \eqref{eq:riccati} is available in closed form.
As is well known, care needs to be taken in the precise expression due to the choice of branches of the complex square root and the possible discontinuities in equivalent formulations of the Heston characteristic function, often referred to as the Little Heston Trap; see e.g.~\cite{KL06,Albrecher}. We use
\[
  \psi^{1/2}(t,z)
    =
    \frac{\beta(z)-d(z)}{\nu^2}
    \frac{1-e^{-d(z)t}}{1-g(z)e^{-d(z)t}},
\]
where
\[
  d(z)=\sqrt{\beta^2(z)-\nu^2(z^2-z)},
  \qquad
  \beta(z):=\lambda-\rho \nu z,
  \qquad
  g(z):=\frac{\beta(z)-d(z)}{\beta(z)+d(z)}
\]
and the principal branch of the square root is chosen, i.e.~$\Re(d(z))\geq0$; cf.~\cite[Theorem 3]{KL06}.

The expansion further simplifies here, as the linear Volterra equations \eqref{eq:ordern} reduce to classical linear ODEs.
Moreover, the homogeneous solution is available explicitly:
\[
  \exp\left(
    \int_0^t
     F_x
    \left(z,\psi^{1/2}(s,z)\right)
    \dd{s}
  \right)
  =
  \frac{
    e^{-d(z)t}\big(1-g(z)\big)^2
  }{
    \big(1-g(z)e^{-d(z)t}\big)^2
  }.
\]
Thus the linearized equations can be solved by the usual variation-of-constants formula.

\textit{Pad\'e approximation.} Secondly, the fractional Riccati equation can also be treated by rational approximation methods. Gatheral and Radoi\v{c}i\'c \cite{gatheral_radoicic_2019,gatheral_radoicic_2024} propose Pad\'e-type approximations of the rough Heston Riccati solution, showing that such rational approximations can yield fast and highly accurate option prices in the relevant Fourier-pricing domain. In our numerical experiments for expansions around $H_0=0$, we use the root-Pad\'e variant of degree $m=7$ described in Appendix~\ref{app:root-pade}, which improves the approximation in the out-of-the-money region.

\textbf{Expansion. }
The implementation of the Riccati expansion itself is straightforward, as all combinatorial terms have been laid out in the theory above, and the computation of the expansion terms essentially amounts to solving linear Volterra equations with fractional-logarithmic kernels.
We discretize these equations by a piecewise-linear product-integration method detailed in Section~\ref{sec:fraclogvolterra}.

As already mentioned above, in the classical Heston case $H_0 = 1/2$, the situation simplifies considerably.
The kernel without logarithmic derivatives is regular, and the linear Volterra equations reduce to standard inhomogeneous ODEs with the same homogeneous part for all expansion orders.
Since this homogeneous solution is available explicitly, the corresponding linearized equations are implemented directly by a variation-of-constants formula.

\textbf{Fourier pricing. }
In the numerical experiments we consider the valuation of European call options with payoff
$f_K(x)=(e^x-K)^+$, $K>0$, written as a function of the log-price
$X_T^H=\log(S_T^H/S_0^H)$. 
By the Lewis Fourier pricing formula \cite{Lewis2001,GatheralBook}, and with $S_0^H=1$,
\begin{equation}\label{eq:call_price_fourier}
    C^{H}(T,K) := \mathbb E\big[f_K(X_T^H)\big]
    =
    1-\frac{\sqrt K}{\pi}
    \int_0^\infty
    \Re\left(
        \frac{
            e^{iu\log K}\varphi^{H}(1/2-iu)
        }{
            u^2+1/4
        }
    \right)\dd u,
\end{equation}
where $\varphi^H$ is the characteristic function of $X_T^H$ defined in \eqref{eq:charfct}.
Numerically, we also tested the standard exponentially damped vertical contours
$z=R-iu$, $R>1$; see \cite{EGP10}. In the parameter regimes
considered below, these tests led to the same qualitative conclusions. These
contours, however, lie outside the strip on which our expansion estimates are
stated, and are therefore not directly covered by the convergence theory
developed above.

Using the expansion of the Riccati equation from Section~\ref{sec:riccati_expansion}, we replace $\varphi^H$ in \eqref{eq:call_price_fourier} by
\[
    \varphi^{H}_{N,H_0}(z)
    :=
    \exp\left(\int_0^T
    F\left(z,P_{N,H_0}(T-t;H,z)\right)g^H(t)\dd t\right),
\]
around an expansion point $H_0\in(-1/2,1/2]$, where $P_{N,H_0}$ is defined in \eqref{eq:taylor}, and $g^H$ and $F$ are defined in \eqref{def:gH} and \eqref{def:F}. The contour cutoff $M$ is chosen adaptively from a predefined set of
integration endpoints $0=M_0<M_1<\dots<M_{P_{\mathrm{max}}}$, increasing the Fourier domain
until the computed prices, or the corresponding implied volatilities, change by
less than the prescribed tolerance.

Assuming that the reference Riccati solution at the expansion point $H_0$ can
be precomputed accurately once, we use the decomposition
\[
    \varphi^{H}_{N,H_0}(z)
    =
    \varphi^{H_0}(z)\,
    m^{H}_{N,H_0}(z).
\]
We then build the Fourier quadrature weights from the product of the Lewis
kernel and the reference characteristic function $\varphi^{H_0}$. On each interval
$[M_{p-1},M_p]$, let $(u_{p,j})$ denote the coarser grid on which the
correction factor is evaluated, and let $(\ell_{p,j})$ be the corresponding
local interpolation functions. In the implementation we use Chebyshev--Lobatto nodes for $(u_{p,j})$
and construct the weights by Gauss--Legendre quadrature on each interval. We
precompute weights of the form
\[
    W^{H_0}_{K,p,j}
    :=
    \int_{M_{p-1}}^{M_p}
    \frac{
        e^{iu\log K}
        \varphi^{H_0}(1/2-iu)
    }{
        u^2+1/4
    }
    \ell_{p,j}(u)\dd u .
\]
The expansion then enters only through the values
$m^{H}_{N,H_0}(1/2-iu_{p,j})$, which are combined with these precomputed
weights:
\begin{equation}\label{eq:call_approx}
        C^{H}(T,K)
    \approx
    1-\frac{\sqrt K}{\pi}
    \Re\left(
        \sum_{p=1}^P \sum_j
        W^{H_0}_{K,p,j}\,
        m^{H}_{N,H_0}(1/2-iu_{p,j})
    \right).
\end{equation}
This separates the accurate computation of the reference solution from the
comparatively cheap computation of the expansion correction. In particular,
once the weights associated with $\varphi^{H_0}$ have been computed,
changing the expansion order only requires recomputing the multiplier
$m^{H}_{N,H_0}$ on the coarser grid.
\subsection{Experimental setup}

\textbf{Model Parameters. }
For the underlying Heston model we fix the following parameter set, commonly
used in the literature:
\[
    \lambda = 0.3,\qquad
    \theta = 0.3\cdot 0.02,\qquad
    \nu = 0.3,\qquad
    \rho = -0.7,\qquad
    V_0 = 0.02.
\]
We normalize the spot to $S_0=1$ and set the risk-free rate to $r=0$. The Hurst
parameter varies across the experiments. 
To keep the discussion tractable, we only document the expansions around the two
boundary cases $H_0=1/2$ and $H_0=0$. The implementation, however, allows
for a generic expansion point $H_0 \in (-1/2, 1/2]$, and the results interpolate analogously.

\textbf{Diagnostics. }
\textit{Implied volatilities.} As the main diagnostic quantities for the numerical efficiency of the expansion
we use Black--Scholes implied volatilities, as a scale for comparing
option-price errors across strikes and maturities and, in particular, for
detecting inaccuracies in the characteristic function through the tails entering
the Fourier integral.
To this end, we consider maturities at four different scales,
\[
    T \in \{0.019, 0.083,0.25,1\},
\]
roughly corresponding approximately to one week, one month, three months, and one year. 
As is well known, the Fourier pricing of short-maturity options requires a
larger contour cutoff $M$ to achieve reasonable accuracy. We therefore restrict
attention to a range of  maturity-dependent strikes.
Specifically, for each maturity $T$ we take $76$ strikes with uniformly spaced
log-moneyness
\[
    k:=\log(K/ S_0) \in \sqrt{T}\,[-1.0,0.5].
\]
In addition to plotting the implied-volatility smiles, we record for each pair
$(H,T)$ and each considered expansion order $N$ the largest relative implied-volatility
error over the strike grid. That is, for each strike we take the absolute
difference between the implied volatility obtained from the expansion and the
benchmark implied volatility, divide it by the benchmark implied volatility, and
then take the maximum over the $76$ strikes in the maturity-dependent
log-moneyness grid.

\textit{Convergence.}
To assess the behaviour of the $H$-expansion at the Riccati level, we use
coefficient ratios associated with the Taylor polynomial
$P^{H}_{N,H_0}$ in \eqref{eq:taylor}. We first estimate a convergence
radius along the Fourier contour $z=R-iu$. For fixed $u$, let
\[
    \bar f_n^{H_0}(u)
    :=
    \max_{0\leq t\leq T}
    \left|
        \partial_H^n\psi(t;H_0,R-iu)
    \right|,
\]
where the maximum is taken over the time grid used in the Riccati computation with $T=1$ and $J=500$ steps.
We estimate the radius of convergence by the coefficient ratios
\[
    R_{\mathrm{est}}^{H_0}(n,u)
    :=
    \frac{(n+1)\bar f_n^{H_0}(u)}
         {\bar f_{n+1}^{H_0}(u)} .
\]
For each $u$, we compute $R^{H_0}_{\mathrm{est}}(n,u)$ up to order $n=19$ and define
$R^*_{H_0}(u)$ as the mean of the first window of five consecutive estimates whose
range is smaller than $10^{-2}$; if no such window exists, no radius estimate is
assigned.

As a second diagnostic, we consider the tail ratio of the final Taylor terms,
\[
    \max_{n=N-3,\ldots,N-1}
    \left(
    \frac{|H- H_0|}{n+1}
    \frac{\bar f^{H_0}_{n+1}(u)}{\bar f^{H_0}_n(u)}
    \right).
\]
This quantity may serve as a a local indicator for the convergence speed of the
Taylor expansion at the Riccati level.

\subsection{Results}

We start by discussing the  convergence of the Riccati
expansion, and then present the results for pricing call options with the
expansion approach.

Figure~\ref{fig:ratio_heatmap} shows the tail-ratio diagnostic for the final
Taylor contributions of the order-$12$ expansion, together with the estimated
convergence boundary $H=H_0\pm R^*_{H_0}(|u|)$. As expected, the estimated convergence
region narrows as the Fourier frequency $|u|$ increases. Nevertheless, the
resulting intervals remain sufficiently wide even for large $u$, making the
expansion numerically meaningful over the Fourier range relevant for option pricing.
In fact, the observed convergence intervals are substantially wider than the
theoretical bounds proved in Theorem \ref{thm:errorbound}.

\begin{figure}[H]
    \centering
    \includegraphics[width=\linewidth]{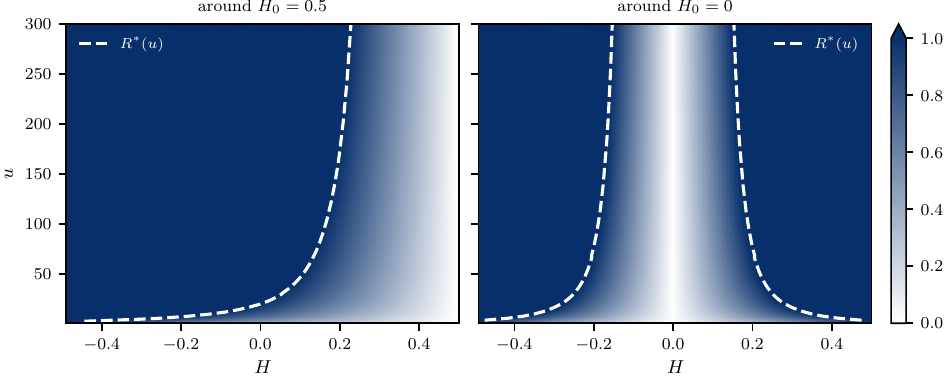}
    \caption{Riccati-level tail-ratio diagnostic for the $H$-expansion of order
    $N=12$ around the two expansion points $H_0=1/2$ and $H_0=0$. The
    colour scale shows the tail ratio of the final Taylor contributions, and
    the overlaid curves show the estimated convergence boundary
    $H=H_0\pm R^*(u)$. The heatmaps are computed on a grid of $100$ Hurst
    values and $300$ Fourier frequencies; the maximum over time is taken over
    the graded Riccati time grid on $[0,1]$ with $500$ time steps.}
    \label{fig:ratio_heatmap}
\end{figure}

The expansion approach makes it straightforward to vectorize the computation of
approximate Riccati solutions over several values of $H$. Indeed, for each
considered expansion point $H_0$, computing the full array of expanded Riccati
solutions used to create the heatmap in Figure~\ref{fig:ratio_heatmap}, at
expansion order $N=12$, over $100$ Hurst parameters, $300$ Fourier frequencies,
and $500$ time steps, took less than four seconds\footnote{On an Apple M3 CPU
with $8$ cores and $24$GB of memory, using a JAX implementation.}. This suggests
that the Riccati-level diagnostic provides a cheap heuristic for identifying the
region in which the expansion is expected to converge and can therefore be used
for pricing.

In Figures~\ref{fig:expansion-anchor0p5} and~\ref{fig:expansion-anchor0} we
plot implied volatilities computed with the approximation
\eqref{eq:call_approx} for the expansions around $H_0=1/2$ and $H_0=0$,
respectively, and compare them against benchmark implied volatilities computed
using a direct discretization of the Riccati equation; see
Section~\ref{sec:implementation}.  
For completeness, we also report the maximal relative errors in implied
volatilities for each pair of $H$ and $T$ in Table~\ref{tab:impld_vol_H0p5} and \ref{tab:impld_vol_H0p0}.

For the case $H_0=1/2$, we show implied
volatilities for $H\in\{0.4,0.3,0.2\}$.
We observe that the smiles obtained
with expansion orders $N=2$ and $N=4$ visually overlap with the benchmark for
$H=0.4$ across all considered maturities, and for maturity $T=1$ across all
considered values of $H$. Only for smaller $H$ and short maturities do we start
to observe a mismatch in the wings of the smile, first for expansion order
$N=2$ and then, for $(H,T)=(0.2,0.083)$ and $(H,T)=(0.3,0.019)$, also a visible
mismatch for expansion order $N=4$. At $(H,T)=(0.2,0.019)$, the $u$-range
required by the adaptive Fourier pricing extends beyond the estimated
convergence radius of the expansion.

\begin{figure}[H]
    \centering
    \includegraphics[width=\textwidth]{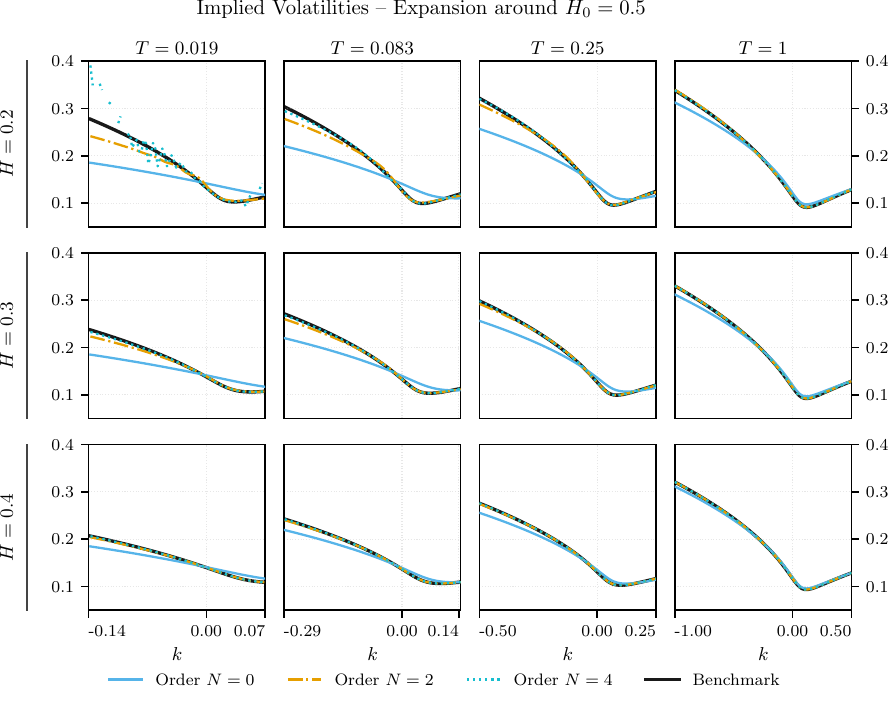}
    \caption{Implied-volatility smiles for the expansion around $H_0 H=1/2$.
    Rows correspond to the target Hurst parameters
    $H\in\{0.4,0.3,0.2\}$ and columns to the maturities
    $T\in\{0.019, 0.083,0.25,1\}$. The expansion orders
    $N\in\{0,2,4\}$ are compared with benchmark implied volatilities computed
    using a direct discretization of the Riccati equation.}
    \label{fig:expansion-anchor0p5}
\end{figure}

\begin{table}[H]
\centering
\small
\caption{Maximal relative implied volatility errors, expansion around $H_0 = 0.5$.}\label{tab:impld_vol_H0p5}
\label{tab:implvols_0p5}
\begin{tabular}{lrr@{\quad}rr@{\quad}rr@{\quad}rr}
\toprule
$H$ & \multicolumn{2}{c}{$T = 0.019$} & \multicolumn{2}{c}{$T = 0.083$} & \multicolumn{2}{c}{$T = 0.25$} & \multicolumn{2}{c}{$T = 1$} \\
\cmidrule(lr){2-3}  \cmidrule(lr){4-5}  \cmidrule(lr){6-7}  \cmidrule(lr){8-9}
 & $N=2$ & $N=4$ & $N=2$ & $N=4$ & $N=2$ & $N=4$ & $N=2$ & $N=4$ \\
\midrule
$0.2$ & $0.1295$ & $0.3965$ & $0.0841$ & $0.0293$ & $0.0415$ & $0.0070$ & $0.0057$ & $0.0010$ \\
$0.3$ & $0.0586$ & $0.0187$ & $0.0408$ & $0.0105$ & $0.0209$ & $0.0019$ & $0.0020$ & $0.0002$ \\
$0.4$ & $0.0120$ & $0.0017$ & $0.0092$ & $0.0010$ & $0.0048$ & $0.0002$ & $0.0003$ & $0.0000$ \\
\bottomrule
\end{tabular}
\end{table}

A similar picture is seen for $H_0=0$, where we consider implied volatilities
for Hurst parameters $H\in\{-0.3,-0.2,-0.1,0.1,0.2\}$. The overall behaviour is
similar: the volatility smiles match the benchmark  closely for larger
maturities and for values of $H$ closer to zero. Interestingly, the expansion
appears to be more accurate in the negative-$H$ range. For $H=-0.1$, the smiles
for expansion orders $N=2$ and $N=4$ visually match the benchmark across all
maturities, while for $H=-0.2$ a slight mismatch becomes visible only at
$T=0.019$.
This is promising for the expansion approach, since direct Riccati
discretizations become particularly delicate in the negative-$H$ regime.

\begin{figure}[H]
    \centering
    \includegraphics[width=\textwidth]{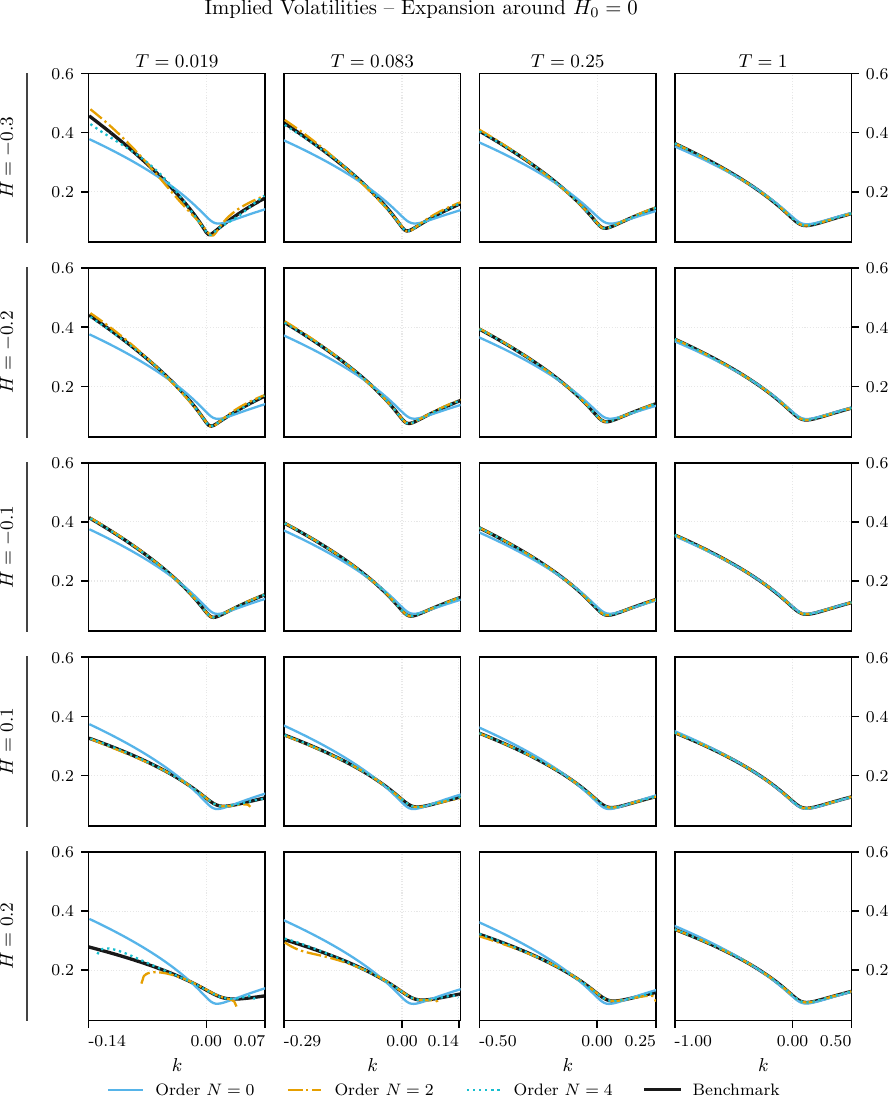}
    \caption{Implied-volatility smiles for the expansion around $H_0=0$.
    Rows correspond to the target Hurst parameters
    $H\in\{-0.3,-0.2, -0.1, 0.1,0.2\}$ and columns to the maturities
    $T\in\{0.019, 0.083,0.25,1.0\}$. The expansion orders
    $N\in\{0,2,4\}$ are compared with benchmark implied volatilities computed
    using a direct discretization of the Riccati equation.}
    \label{fig:expansion-anchor0}
\end{figure}

For $H=0.2$ we observe an interesting boundary behaviour. The $u$-range required
for the maturities $T=0.083$ and $T=0.019$ appears to remain within, but close
to, the estimated convergence radius, and the approximation improves with
increasing expansion order. We verified numerically that the smiles eventually
match the benchmark at orders $N=8$ and $N=10$, respectively. By contrast, for
the same maturities and $H=-0.2$, the expansion converges much faster.
This asymmetry in effective convergence speed is likely explained by the alternating signs in the expansion for $H<0$. For
$H=-0.3$, the $u$-range required for $T=0.083$ and $T=0.019$ reaches outside the
estimated convergence radius; nevertheless, the implied volatilities for the
displayed expansion orders still remain fairly close to the benchmark.

\begin{table}[H]
\centering
\small
\caption{Maximal relative implied volatility errors, expansion around $H_0 = 0$.}
\label{tab:impld_vol_H0p0}
\label{tab:implvols_0}
\begin{tabular}{lrr@{\quad}rr@{\quad}rr@{\quad}rr}
\toprule
$H$ & \multicolumn{2}{c}{$T = 0.019$} & \multicolumn{2}{c}{$T = 0.083$} & \multicolumn{2}{c}{$T = 0.25$} & \multicolumn{2}{c}{$T = 1$} \\
\cmidrule(lr){2-3}  \cmidrule(lr){4-5}  \cmidrule(lr){6-7}  \cmidrule(lr){8-9}
 & $N=2$ & $N=4$ & $N=2$ & $N=4$ & $N=2$ & $N=4$ & $N=2$ & $N=4$ \\
\midrule
$-0.3$ & $0.1816$ & $0.0674$ & $0.0512$ & $0.0138$ & $0.0155$ & $0.0065$ & $0.0030$ & $0.0027$ \\
$-0.2$ & $0.0562$ & $0.0150$ & $0.0212$ & $0.0073$ & $0.0075$ & $0.0054$ & $0.0025$ & $0.0025$ \\
$-0.1$ & $0.0114$ & $0.0074$ & $0.0074$ & $0.0064$ & $0.0042$ & $0.0040$ & $0.0025$ & $0.0025$ \\
$0.1$ & $0.2040$ & $0.0257$ & $0.0337$ & $0.0036$ & $0.0038$ & $0.0031$ & $0.0025$ & $0.0025$ \\
$0.2$ & $0.3180$ & $0.0611$ & $0.2203$ & $0.1697$ & $0.0954$ & $0.0032$ & $0.0024$ & $0.0026$ \\
\bottomrule
\end{tabular}
\end{table}

\section{Summary and outlook}

In this paper, we have proven that the solution to the fractional Riccati equation and hence also the characteristic function of the log stock price in the rough Heston model are real analytic in the Hurst parameter $H>-1/2$. We have then expanded the solution of the fractional Riccati equation, using that the $n$-th partial derivative with respect to $H$ solves a linear Volterra equation. Our theoretical error estimate suggests that the radius of convergence of the Taylor series should be decreasing in the size of the Fourier coefficients. This is confirmed by our numerical experiments, according to which the convergence radius is larger than $0.25$ (resp.~$0.15$) for $|u|\leq 300$ if we expand in $H=0.5$ (resp.~$H=0$). 
We have then replaced the solution to the fractional Riccati equation by its $N$-th Taylor polynomial in the Fourier pricing formula and computed implied volatilities. The resulting numerical scheme is very fast and easy to implement and our numerical results suggest that the approximation works very well for already small values of $N$ like $N=2,3,4$.

While our numerical results are very strong, it is not possible to infer the exact convergence radius from our theoretical error estimate. The reason is that we rely on recursive estimates to control the size of the $n$-th derivative of the Riccati solution, which due to the quadratic structure  naturally results in large bounds. For this reason, we did also not provide weak error bounds for option prices, which could of course be derived from the error bound for the approximation of the fractional Riccati solution using similar techniques as in \cite{bayer2023rough}. We leave both questions for future research.
Finally, it may also be interesting to apply our methodology to the pricing of path-dependent options in the rough Heston model.

\appendix
\section{Quadratic recurrence relation}

The estimate of the error term of the Taylor expansion relies on the following result. 

\begin{Lem}\label{lem:recurrence}
 If for given  $\alpha_1,\alpha_2,c_1>0$ the sequence $c=(c_n)_{n\in\N}$ satisfies the quadratic recurrence
 \begin{align*}
     c_n=\alpha_1c_{n-1}+\alpha_2\sum_{k=1}^{n-1}c_kc_{n-k},\quad n\geq2,
 \end{align*}
 then for $n\geq2$,
\begin{align*}
     c_n&=\frac{(-1)^n\alpha_1^n}{\alpha_2(2+4\alpha_2c_0)^n}\sum_{k=\lceil n/2\rceil}^n\binom{k}{n-k}\binom{2k-1}{k}\frac{(-1)^{k}}{2k-1}(1+2\alpha_2c_0)^{2k}\\
     &=\begin{cases}
     \frac{(-1)^{n/2}}{\alpha_2}\left(\frac{\alpha_1}{2}\right)^n\binom{n-1}{n/2}\frac{1}{n-1} {}_2F_1\left(\frac{n-1}{2},-\frac{n}{2},\frac{1}{2},(1+\frac{2\alpha_2c_1}{\alpha_1})^2\right)&:n\in 2\N,\\
     \frac{(-1)^{(n-1)/2}}{\alpha_2}(1+\frac{2\alpha_2c_1}{\alpha_1})\left(\frac{\alpha_1}{2}\right)^n\binom{n}{(n+1)/2}\frac{n+1}{2n} {}_2F_1\left(\frac{1-n}{2},\frac{n}{2},\frac{3}{2},(1+\frac{2\alpha_2c_1}{\alpha_1})^2\right)&:n\in 2\N-1,
     \end{cases}
     \end{align*}
 where $_2F_1$ denotes the ordinary hypergeometric function. Moreover, 
 \begin{equation*}
 0<c_n\leq \frac{2^n(\alpha_1+2\alpha_2c_1)^n}{2(n-1)}\cdot\left(1+\frac{1}{2(1+\frac{2\alpha_2c_1}{\alpha_1})^2}\right)^{\lfloor n/2\rfloor}.
 \end{equation*}
 i.e.~$c_n$ is of exponential growth as $n\rightarrow\infty$
\end{Lem}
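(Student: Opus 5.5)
The plan is to solve the recurrence exactly with a generating function, expand the resulting square root to get the closed form, and then bound that closed form term by term. Set $C(x):=\sum_{n\geq1}c_nx^n$ as a formal power series. Multiplying the recurrence by $x^n$ and summing over $n\geq2$ gives
\[
C(x)-c_1x=\alpha_1xC(x)+\alpha_2C(x)^2 ,
\]
that is, $\alpha_2C^2-(1-\alpha_1x)C+c_1x=0$. Since $C(0)=0$, the correct branch is
\[
C(x)=\frac{(1-\alpha_1x)-\sqrt{(1-\alpha_1x)^2-4\alpha_2c_1x}}{2\alpha_2}.
\]
Conversely, this $C$ has nonnegative radius of convergence, and its Taylor coefficients satisfy the recurrence with the given $c_1$. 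Because the recurrence determines $(c_n)$ uniquely, the coefficients of $C$ are exactly the $c_n$.

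\textbf{Closed form.} With $\beta:=\alpha_1+2\alpha_2c_1$, the discriminant is $1-y$ where $y:=2\beta x-\alpha_1^2x^2$. I expand
\[
\sqrt{1-y}=1-\sum_{k\geq1}\frac{1}{2k-1}\binom{2k}{k}\frac{y^k}{4^k},
\]
and then expand each $y^k=x^k(2\beta-\alpha_1^2x)^k$ binomially. The coefficient of $x^n$ in $y^k$ is $\binom{k}{n-k}(2\beta)^{2k-n}(-\alpha_1^2)^{n-k}$, and it is nonzero only for $\lceil n/2\rceil\leq k\leq n$. For $n\geq2$ the linear part $1-\alpha_1x$ does not contribute. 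Collecting powers of $\alpha_1$ and writing $1+2\alpha_2c_1/\alpha_1=\beta/\alpha_1$ yields the stated alternating finite sum, after using $\binom{2k}{k}=2\binom{2k-1}{k}$ and pulling out $(\alpha_1/2)^n$. The two hypergeometric expressions follow by reindexing $k=n/2+m$ (for $n$ even) or $k=(n+1)/2+m$ (for $n$ odd). One then checks that the term ratios are rational in $m$ with the parameters $\tfrac{n-1}{2},-\tfrac n2;\tfrac12$, respectively $\tfrac{1-n}{2},\tfrac n2;\tfrac32$, in the variable $(\beta/\alpha_1)^2$. This is the standard Pochhammer bookkeeping for terminating ${}_2F_1$ series.

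\textbf{Positivity and upper bound.} Positivity $c_n>0$ follows by induction directly from the recurrence, since $\alpha_1,\alpha_2,c_1>0$. For the upper bound I would not use the alternating sum with cancellations. Instead I bound $c_n$ by the sum of the absolute values of its terms:
\[
c_n\leq\frac{1}{2\alpha_2}\sum_{k=\lceil n/2\rceil}^n\frac{1}{2k-1}\binom{2k}{k}4^{-k}\binom{k}{n-k}(2\beta)^{2k-n}\alpha_1^{2(n-k)}.
\]
On this range $2k-1\geq n-1$, which produces the factor $1/(n-1)$, and $\binom{2k}{k}4^{-k}\leq1$. Pulling out $(2\beta)^n$ leaves $\sum_{j=0}^{\lfloor n/2\rfloor}\binom{n-j}{j}\bigl(\alpha_1^2/(4\beta^2)\bigr)^j$ with $j=n-k$. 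This should then be bounded by a geometric-type product of the form $\bigl(1+\tfrac{1}{2(\beta/\alpha_1)^2}\bigr)^{\lfloor n/2\rfloor}$, after absorbing powers of $2$ via $\binom{n-j}{j}\leq 2^{n-j}$ together with $\beta\geq\alpha_1$.

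\textbf{Main obstacle.} I expect the hardest step to be this last bookkeeping: the powers of $2$ and the prefactor $1/\alpha_2$ must be arranged to give exactly the displayed constant. If they do not match, I would first try the sharper estimate $\binom{2k}{k}4^{-k}\leq(\pi k)^{-1/2}$. Alternatively, I would absorb $\alpha_2$ using $2\alpha_2c_1\leq\beta$. Failing both, I would accept a slightly different constant, since only exponential growth of $c_n$ is needed later.
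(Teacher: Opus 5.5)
Your generating-function derivation, the closed form and the positivity-by-induction argument are correct, and they follow the paper's route. The paper instead passes to $d_1=\alpha_1/2+\alpha_2c_1$, $d_n=\alpha_2c_n$ ($n\ge2$), so that $d_n=\sum_{i=1}^{n-1}d_id_{n-i}$ for $n\ge3$; this leads to the same square root $\sqrt{1-2\beta x+\alpha_1^2x^2}$. The $c_0$ in the first displayed formula has to be read as $c_1/\alpha_1$, and your substitution $1+2\alpha_2c_1/\alpha_1=\beta/\alpha_1$ does this implicitly. Your hypergeometric parameter pairs do match the term ratios of the reindexed sums, which the paper never checks.

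The gap is in the last step of the upper bound, and it has two parts.

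\textbf{The binomial estimate is too weak.} The inequality $\binom{n-j}{j}\le 2^{n-j}$ turns $\sum_j\binom{n-j}{j}\bigl(\alpha_1^2/(4\beta^2)\bigr)^j$ into $2^n\sum_j\bigl(\alpha_1^2/(8\beta^2)\bigr)^j$. That gives the exponential rate $4\beta$ instead of $2\beta\bigl(1+\alpha_1^2/(2\beta^2)\bigr)^{1/2}$, and $\beta\ge\alpha_1$ cannot compensate for the extra $2^n$. The inequality you need is $\binom{n-j}{j}\le 2^j\binom{\lfloor n/2\rfloor}{j}$. It follows by comparing the $j$ factors of the two falling factorials: $n-j-i\le 2(\lfloor n/2\rfloor-i)$ for $0\le i\le j-1$. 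The binomial theorem then gives exactly $\bigl(1+\tfrac{1}{2(\beta/\alpha_1)^2}\bigr)^{\lfloor n/2\rfloor}$.

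\textbf{The factor $1/\alpha_2$ cannot be removed.} Using $2\alpha_2c_1\le\beta$ only yields $1/\alpha_2\ge 2c_1/\beta$, which is the wrong direction. In fact the displayed inequality is false as stated. Take $\alpha_1=1$, $\alpha_2=1/20$, $c_1=10$. Then $\beta=2$ and $c_2=\alpha_1c_1+\alpha_2c_1^2=15$, while the right-hand side equals $\tfrac{16}{2}\cdot\tfrac98=9$.

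The paper's own chain of inequalities bounds $d_n=\alpha_2c_n$, not $c_n$. What is actually true is the displayed bound multiplied by $1/\alpha_2$, and your argument gives exactly this once the binomial estimate is fixed. Your fallback of accepting a different constant is therefore the right call. The extra factor does not depend on $n$, so the exponential growth rate used in Theorem~\ref{thm:errorbound} is unaffected.
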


\begin{proof}
Let us define a new sequence $d=(d_n)_{n\in\N}$ by 
\begin{align*}
d_1&:=\frac{\alpha_1}{2}+\alpha_2c_1,\\
d_n&:=\alpha_2 c_n,\quad n\geq2.
\end{align*}
Then $d$ satisfies the quadratic recurrence
\[d_n=\sum_{i=1}^{n-1}d_id_{n-i},\quad n\geq 3.\]
Hence, the generating function of this recurrence
\[d(x)=\sum_{n=1}^\infty d_nx^n\]
satisfies on its domain of definition the equation
\[d(x)=d^2(x)+d_1x-\frac{\alpha_1^2}{4}x^2,\]
which possesses the two solutions
\[d_{1/2}(x)=\frac{1}{2}\left(1\pm\sqrt{1+\alpha_1^2x^2-4d_1x}\right).\]
Due to $d(0)=0$, we must have
\begin{align*}
    d(x)&=\frac{1}{2}\left(1-\sqrt{1+\alpha_1^2x^2-4d_1x}\right)\\
    &=-\frac{1}{2}\sum_{k=1}^\infty\binom{1/2}{k}(\alpha_1^2x^2-4d_1x)^k\\
    &=-\frac{1}{2}\sum_{k=1}^\infty\binom{1/2}{k}x^k\sum_{j=0}^k\binom{k}{j}(\alpha_1^2x)^j(-4d_1)^{k-j}\\
    &=-\frac{1}{2}\sum_{n=1}^\infty\frac{\alpha_1^{2n}(-1)^nx^n}{4^nd_1^n}\sum_{k=\lceil n/2\rceil}^n\binom{1/2}{k}\binom{k}{n-k}\left(-\frac{4d_1}{\alpha_1}\right)^{2k}\\
    &=\sum_{n=1}^\infty\frac{(-1)^n\alpha_1^nx^n}{(4d_1/\alpha_1)^n}\sum_{k=\lceil n/2\rceil}^n\binom{k}{n-k}\binom{2k-1}{k}\frac{(-1)^k}{2k-1}\left(\frac{2d_1}{\alpha_1}\right)^{2k}.
\end{align*}
Matching coefficients we obtain
\begin{align*}
     d_n%
     &=\frac{(-1)^n\alpha_1^n}{(2+\frac{4\alpha_2c_1}{\alpha_1})^n}\sum_{k=\lceil n/2\rceil}^n\binom{k}{n-k}\binom{2k-1}{k}\frac{(-1)^{k}}{2k-1}\left(1+\frac{2\alpha_2c_1}{\alpha_1}\right)^{2k}.
 \end{align*}
 To derive the asymptotic behaviour, we note that for $n\geq 2$,
 \begin{align*}
     d_n&=\frac{(-1)^n\alpha_1^n}{(2+\frac{4\alpha_2c_1}{\alpha_1})^n}\sum_{k=\lceil n/2\rceil}^n\binom{k}{n-k}\binom{2k-1}{k}\frac{(-1)^{k}}{2k-1}\left(1+\frac{2\alpha_2c_1}{\alpha_1}\right)^{2k}\\
     &\leq \frac{\alpha_1^n}{2(2+\frac{4\alpha_2c_1}{\alpha_1})^n}\sum_{j=0}^{\lfloor n/2\rfloor}\binom{n-j}{j}\frac{4^{n-j}(1+\frac{2\alpha_2c_1}{\alpha_1})^{2(n-j)}}{2n-2j-1}\\
     &\leq\frac{2^n\alpha_1^n(1+\frac{2\alpha_2c_1}{\alpha_1})^n}{2(n-1)}\sum_{j=0}^{\lfloor n/2\rfloor}\binom{n-j}{j}4^{-j}\left(1+\frac{2\alpha_2c_1}{\alpha_1}\right)^{-2j}\\
     &\leq \frac{2^n(\alpha_1+2\alpha_2c_1)^n}{2(n-1)}\sum_{j=0}^{\lfloor n/2\rfloor}\binom{\lfloor n/2\rfloor}{j}2^{-j}\left(1+\frac{2\alpha_2c_1}{\alpha_1}\right)^{-2j}\\
     &=\frac{2^n(\alpha_1+2\alpha_2c_1)^n}{2(n-1)}\cdot\left(1+\frac{1}{2(1+\frac{2\alpha_2c_1}{\alpha_1})^2}\right)^{\lfloor n/2\rfloor}.
 \end{align*}
\end{proof}

\section{Fractional Volterra equations}

\subsection{Bounds for complex-valued, linear Volterra equations}

For the proof of Theorem \ref{thm:errorbound} it is important to have a priori bounds for the solution of the parameter-dependent fractional Riccati equation $\psi$. The following lemma follows from \cite[Theorem C.3, Corollary C.4]{AJEE19}, where the result is shown for constant source terms. 

\begin{Lem}\label{lem:linearvolterrabound} Let $k\in L^2([0,T],\R)$ be non-negative, non-increasing, and
continuous on $(0,T]$ and suppose that its resolvent of the first kind  is non-negative and non-increasing. Let $h_0,z:[0,T]\to\C$ be continuous functions such that $\text{Re}(z)\leq\lambda$ for some $\lambda\in\R$. Then the equation 
    \[h(t)=h_0(t)+\int_0^tk(t-s)z(s)h(s)ds,\quad t\in[0,T],\]
    has a unique continuous solution $h:[0,T]\to\C$, which satisfies for all $t\in[0,T]$,
    \[|h(t)|\leq |h_0(t)|+\|h_0z\|_{\infty,T}\int_0^tE_\lambda(s)ds,\]
    where $E_\lambda$ denotes the canonical resolvent of $k$ with respect to $\lambda$.
\end{Lem}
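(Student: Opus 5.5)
We follow the reduction suggested just before the statement: first prove the estimate for constant source terms, which is \cite[Theorem C.3, Corollary C.4]{AJEE19}, and then extend it to general continuous $h_0$ by a resolvent argument. Existence and uniqueness of a continuous solution are standard for linear Volterra equations of the second kind with $L^1$ (here even $L^2$) kernel and continuous coefficient $z$. One argues by Picard iteration in a weighted sup-norm, or via the resolvent of the kernel $k(t-s)z(s)$. We take this as the first step.

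For the bound, set $g:=h-h_0$. Then
\[
g(t)=\int_0^t k(t-s)z(s)h_0(s)\,ds+\int_0^t k(t-s)z(s)g(s)\,ds,
\]
so $g$ solves the same linear equation with forcing $f:=k*(zh_0)$, which vanishes at $0$. It therefore suffices to show
\[
|g(t)|\le \|h_0z\|_{\infty,T}\int_0^tE_\lambda(s)\,ds .
\]
The idea is to compare with the constant-source case. Fix $t$ and write $\eta:=zh_0$. The solution of $g=k*\eta+k*(zg)$ is $g=\mathcal R(k*\eta)$, where $\mathcal R$ denotes the solution operator of the linear equation.

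First we rewrite $g=\mathcal R(k*\eta)$ as a superposition of solutions with constant source. Let $\chi_r$ solve
\[
\chi_r(t)=\int_r^t k(t-s)\,ds+\int_r^t k(t-s)z(s)\chi_r(s)\,ds,
\]
which is the constant-source problem ($\eta\equiv1$) started at time $r$. Formally $\partial_r\chi_r$ is the response to an impulse at $r$, and Duhamel's principle gives
\[
g(t)=-\int_0^t \eta(r)\,\partial_r\chi_r(t)\,dr .
\]
The cleaner route, and the one we prefer, avoids this differentiation. We invoke the result of \cite{AJEE19} for the solution $u$ of the constant-source equation, and transfer it to time-dependent $\eta$ through its proof rather than its statement. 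That proof rests on the nonnegativity and monotonicity of the resolvent of the first kind $L$ of $k$: multiplying by $L$ converts the equation into a differential form. Writing $G:=L*g$, one obtains
\[
\frac{d}{dt}\big(L*g\big)=z\,g+\eta
\]
in the appropriate sense, since $L*k\equiv1$. From this one derives a differential inequality for $|g|$ using $\mathrm{Re}(z)\le\lambda$, essentially $\tfrac{d}{dt}|G|\le\lambda|g|+|\eta|$ in a Volterra sense. Bounding $|\eta|\le\|h_0z\|_{\infty,T}$ and applying the comparison principle for kernels with nonnegative, nonincreasing first-kind resolvent (as in \cite[Theorem C.3]{AJEE19}) yields $|g|\le \|\eta\|_\infty\, v$. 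Here $v$ solves $v=k*1+\lambda\,k*v$, whose solution is $v(t)=\int_0^tE_\lambda(s)\,ds$ by definition of the canonical resolvent.

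The main obstacle is this comparison step with complex $z$ and nonconstant $\eta$. The argument in \cite{AJEE19} handles the modulus of a complex solution via a real-part computation, of the form $\mathrm{Re}(\bar g\,\cdot)$, which must be rechecked when the forcing is time-dependent. Our first attempt is to verify that their proof only uses $|\eta(s)|\le$ const pointwise, so that replacing the constant by $\|h_0z\|_\infty$ goes through verbatim. If it does not, we fall back on the superposition formula above and prove $|\partial_r\chi_r|$-type bounds by the constant case applied on $[r,T]$. Finally, the triangle inequality $|h|\le|h_0|+|g|$ gives the claim.
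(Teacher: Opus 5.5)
Your argument is the paper's proof: with $g=h-h_0$ solving $g=k\ast(zg+zh_0)$, the paper bounds $|g(t)|\le\|h_0z\|_{\infty,t}\int_0^tE_\lambda(s)\,ds$ and concludes with $|h|\le|h_0|+|g|$, exactly as you plan. The step you flag as the main obstacle is handled there simply by applying \cite[Corollary C.4]{AJEE19} to $g=k\ast(zg+w)$ with the non-constant continuous forcing $w=zh_0$. As the paper uses it, that corollary already covers continuous forcing; the ``constant source'' appears to refer to the free term, which is $0$ here. So no transfer through the proof is needed, and you should drop your Duhamel fallback: bounds on $|\chi_r(t)|$ control only $\bigl|\int_0^t\partial_r\chi_r(t)\,dr\bigr|$, whereas a bound in terms of $\|\eta\|_\infty$ requires $\int_0^t|\partial_r\chi_r(t)|\,dr$.
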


\begin{proof}
According to \cite[Corollary C.4]{AJEE19}, the equation 
    \[\bar h(t)=\int_0^tk(t-s)\left(z(s)\bar h(s)+z(s)h_0(s)\right)ds,\quad t\geq0,\]
    has a unique continuous solution $\bar h:\R_+\to\C$ which satisfies for all $t\in[0,T]$,
    \[|\bar h(t)|\leq \|h_0z\|_{\infty,t}\int_0^tE_\lambda(s)ds.\]
Since $h:=\bar h+h_0$ solves the equation 
    \[h(t)=h_0(t)+\int_0^tk(t-s)z(s)h(s)ds,\quad t\in[0,T],\]
    it satisfies for all $t\in[0,T]$,
    \[|h(t)|\leq |h_0(t)|+\|h_0z\|_{\infty,t}\int_0^tE_\lambda(s)ds.\]
\end{proof}

\subsection{Differentiability of parameter-dependent Volterra equations}

Let us recall from \cite[Section 9]{GLS90} that a Volterra kernel of continuous type on $[0,T]$ is any measurable function $k:[0,T]^2\to\R^{n\times n}$ such that $k(t,s)=0$ for all $t>s$ and such that for all $t\in[0,T]$,
\[\int_0^T|k(t,s)|ds<\infty\quad\text{and}\quad \int_0^T|k(t+h,s)-k(t,s)|ds\stackrel{h\to0}{\longrightarrow}0.\]

In the following we consider the parameter-dependent Volterra integral equation 
\begin{equation}\label{eq:pVol}
x(\lambda,t)=a(\lambda,t)+\int_0^tk(\lambda,t,s)f(\lambda,s,x(\lambda,s))ds,\quad t\in[0,T],\quad \lambda\in \Lambda,
\end{equation}
where $\Lambda\subset\R$ is an open interval, $a:\Lambda\times[0,T]\to\R^n$ and $f:\Lambda\times[0,T]\times\R^n\to\R^n$. 

\begin{The}[cf.~Theorems 13.1.1 and 13.1.2 in \cite{GLS90}]\label{thm:Voldif}
Suppose that for each $\lambda\in \Lambda$, $k(\lambda,\cdot)$ is a Volterra kernel of continuous type.  
\begin{enumerate}
\item[a)] Assume that $a\in C(\Lambda\times[0,T],\R^n)$ and $f\in C(\Lambda\times[0,T]\times\R^n,\R^n)$. If for every $\lambda\in\Lambda$,
\[\sup_{t\in[0,T]}\int_0^t|k(\lambda+\varepsilon,t,s)-k(\lambda,t,s)|ds\stackrel{\varepsilon\to0}{\longrightarrow}0\]  
and equation \eqref{eq:pVol} has a unique solution $x(\lambda,\cdot)$ on $[0,T]$, then $x\in C(\Lambda\times[0,T],\R^n)$, i.e.~$x$ depends continuously on $\lambda$ and $t$.
\item[b)] Assume that $a\in C^{1,0}(\Lambda\times[0,T],\R^n)$ and $f\in C^{1,0,1}(\Lambda\times[0,T]\times\R^n,\R^n)$. If there exists $k_\lambda:\Lambda\times[0,T]^2\to \R^{n\times n}$ such that for each $\lambda\in\Lambda$, $k_\lambda(\lambda,\cdot)$ is a Volterra kernel of continuous type on $[0,T]$ satisfying as $\varepsilon\to0$,
\[\sup_{t\in[0,T]}\int_0^t|k(\lambda+\varepsilon,t,s)-k(\lambda,t,s)-\varepsilon k_\lambda(\lambda,t,s)|ds=o(\varepsilon),\]
then for each $\lambda\in\Lambda$ there exists a unique solution $x(\lambda,\cdot)$ of \eqref{eq:pVol} on $[0,T]$. Moreover, $x\in C^{1,0}(\Lambda\times[0,T],\R^n)$, i.e.~$x$ is continuously differentiable with respect to $\lambda$, and the derivative satisfies
\begin{equation}
\begin{split}
x_\lambda(\lambda,t)&=a_\lambda(\lambda,t)+\int_0^t\left(k_\lambda(\lambda,t,s)f(\lambda, s,x(\lambda,s))+k(\lambda,t,s)f_\lambda(\lambda,s,x(s,\lambda))\right)ds\\
&\qquad+\int_0^tk(\lambda,t,s)f_x(\lambda,s,x(\lambda,s))x_\lambda(\lambda,s)ds,\quad t\in[0,T].
\end{split}
\end{equation}
\end{enumerate}
\end{The}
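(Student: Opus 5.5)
The statement to prove is Theorem~\ref{thm:Voldif}, the parameter-dependent Volterra result quoted from \cite{GLS90}. The plan is to reduce both parts to the standard theory of nonlinear Volterra equations with kernels of continuous type, namely existence, uniqueness and continuous dependence on data. Part a) is a stability statement for a fixed-point problem. Part b) follows by differentiating the fixed-point identity and controlling the remainder with the linear equation satisfied by the candidate derivative.

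For a), fix $\lambda_0\in\Lambda$ and let $x_0:=x(\lambda_0,\cdot)$, which is continuous on the compact set $[0,T]$. On a tube $U:=\{(s,y):|y-x_0(s)|\le 1\}$, the function $f$ is uniformly continuous in $(\lambda,s,y)$ for $\lambda$ near $\lambda_0$. Write
\[
x(\lambda,t)-x_0(t)=a(\lambda,t)-a(\lambda_0,t)+\int_0^t\bigl(k(\lambda,t,s)-k(\lambda_0,t,s)\bigr)f(\lambda,s,x(\lambda,s))\,ds+\int_0^tk(\lambda_0,t,s)\bigl(f(\lambda,s,x(\lambda,s))-f(\lambda_0,s,x_0(s))\bigr)ds .
\]
The first two terms tend to zero uniformly in $t$: the first by continuity of $a$, the second by the kernel hypothesis, provided $x(\lambda,\cdot)$ stays in the tube. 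The third term is handled with the standard continuous-dependence theorem for Volterra equations of continuous type (\cite[Ch.~12]{GLS90}), which gives $\sup_t|x(\lambda,t)-x_0(t)|\to0$. A continuation argument is needed to show the solution cannot leave the tube before time $T$. This argument uses the uniqueness hypothesis and the fact that a kernel of continuous type has small $L^1$ norm on short intervals, $\sup_t\int_t^{t+h}|k|\,ds\to0$. Joint continuity in $(\lambda,t)$ then follows from continuity in $t$ of each $x(\lambda,\cdot)$ together with uniform convergence as $\lambda\to\lambda_0$.

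For b), the $o(\varepsilon)$ condition implies the $L^1$-continuity condition of a). If we take existence and uniqueness of $x(\lambda,\cdot)$ for each $\lambda$ from the local Lipschitz property of $f$ in $x$ (\cite[Theorem 12.?]{GLS90}), then a) gives continuity. Let $y(\lambda,\cdot)$ be the unique continuous solution of the \emph{linear} equation displayed in the theorem. It exists because $s\mapsto f_x(\lambda,s,x(\lambda,s))$ is continuous and linear Volterra equations with kernels of continuous type have continuous resolvents. Set
\[
r_\varepsilon:=\frac{x(\lambda+\varepsilon,\cdot)-x(\lambda,\cdot)}{\varepsilon}-y(\lambda,\cdot).
\]
Subtracting the equations and Taylor-expanding $f$ in $(\lambda,x)$ with the mean value theorem, we find that $r_\varepsilon$ solves
\[
r_\varepsilon=\eta_\varepsilon+\int_0^tk(\lambda,t,s)f_x(\lambda,s,x(\lambda,s))\,r_\varepsilon(s)\,ds .
\]
The forcing term $\eta_\varepsilon$ collects three pieces: $\varepsilon^{-1}\int|k(\lambda+\varepsilon)-k(\lambda)-\varepsilon k_\lambda|$, which is $o(1)$ by hypothesis; the cross term $\int\bigl(k(\lambda+\varepsilon)-k(\lambda)\bigr)\varepsilon^{-1}\bigl(f(\lambda+\varepsilon,x(\lambda+\varepsilon))-f(\lambda,x(\lambda))\bigr)$; and the modulus-of-continuity errors of $f_\lambda$ and $f_x$.

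The main obstacle is the cross term. It contains the difference quotient of $x$, which is not yet known to be bounded. I will first derive a uniform bound for $\varepsilon^{-1}\|x(\lambda+\varepsilon)-x(\lambda)\|_\infty$ by writing the analogous linear equation for this quotient and applying the resolvent bound (Gronwall for continuous-type kernels). Its forcing term is bounded by $\|a_\lambda\|+\|k_\lambda\|_{L^1}\sup|f|+\ldots$. Once the quotient is bounded, the cross term is $O\bigl(\sup_t\int|k(\lambda+\varepsilon)-k(\lambda)|\bigr)=o(1)$, so $\|\eta_\varepsilon\|_\infty\to0$ and the resolvent bound gives $r_\varepsilon\to0$ uniformly. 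Finally, continuity of $y$ in $(\lambda,t)$ follows by applying a) to the system formed by the original equation and its linear equation, whose data are continuous. This gives $x\in C^{1,0}$.
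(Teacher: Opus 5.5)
The paper gives no proof of this statement; it only quotes Theorems 13.1.1--13.1.2 of Gripenberg--Londen--Staffans. Your outline follows the standard route, and the linearization core of b) is sound. The a priori bound on $q_\varepsilon=\varepsilon^{-1}(x(\lambda+\varepsilon,\cdot)-x(\lambda,\cdot))$ is legitimate: the Gronwall/resolvent bound for $|q|\le h+M\int_0^t|k(\lambda,t,s)|\,|q(s)|\,ds$ depends only on $k(\lambda,\cdot)$ and $M$. It is therefore uniform in $\varepsilon$, provided you split the kernel difference so that only $k(\lambda,\cdot)$ multiplies $q_\varepsilon$. After that, the cross term and the $o(\varepsilon)$ term vanish as you say.

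Two steps of b) cannot go through, however, because the statement as formulated here lacks two hypotheses. The first concerns existence. Existence on $[0,T]$ for every $\lambda$ does not follow from the local Lipschitz property of $f$, which only gives uniqueness and a maximal solution that may explode. Take $k=\mathbf{1}_{\{s\le t\}}$, $k_\lambda=0$, $a\equiv 2$ and $f(\lambda,s,x)=x^2$. All hypotheses of b) hold, but $x(t)=2/(1-2t)$ blows up at $t=1/2<T=1$. Existence must therefore be assumed, as it is in a); in the paper's application it is supplied by the well-posedness of \eqref{eq:riccati}. If you assume it only at one $\lambda_0$, your tube argument together with the local Lipschitz bound does give existence on $[0,T]$ for $\lambda$ near $\lambda_0$. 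That is also exactly what part a) needs there.

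The second gap is in your final step, where you apply a) to the system for $(x,x_\lambda)$, ``whose data are continuous''. The $x_\lambda$-equation contains the kernel $k_\lambda$, so the block kernel of that system involves $k_\lambda$. The kernel hypothesis of a) for it then includes $\sup_t\int_0^t|k_\lambda(\lambda+\varepsilon,t,s)-k_\lambda(\lambda,t,s)|\,ds\to0$, which is nowhere assumed. It is genuinely needed. Take $k(\lambda,t,s)=\phi(\lambda)\mathbf{1}_{\{s\le t\}}$ with $\phi(\lambda)=\lambda^2\sin(1/\lambda)$, $k_\lambda=\phi'(\lambda)\mathbf{1}_{\{s\le t\}}$, $a=0$ and $f\equiv1$. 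Every hypothesis of b) holds, but $x_\lambda(\lambda,t)=t\,\phi'(\lambda)$ is discontinuous at $\lambda=0$. So your argument proves differentiability in $\lambda$ and the linear equation for $x_\lambda$. The claim $x\in C^{1,0}$, however, requires adding this $L^1$-continuity of $k_\lambda$; it holds for the fractional kernel in Theorem~\ref{thm:ndif}, but is not checked there.

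A smaller point concerns a). Since $f$ is merely continuous, the third term cannot be closed by a Gronwall argument. The real content is Arzel\`a--Ascoli. The solutions in the tube are uniformly bounded and asymptotically equicontinuous, because $\int|k(\lambda,t,s)-k(\lambda,t',s)|\,ds\le 2\sup_\tau\int|k(\lambda,\tau,s)-k(\lambda_0,\tau,s)|\,ds+\int|k(\lambda_0,t,s)-k(\lambda_0,t',s)|\,ds$. Uniqueness then identifies every limit point with $x_0$; this is where uniqueness is used, not only in the continuation step. Deferring this to a continuous-dependence theorem from the same book leaves a) essentially unproved.
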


\section{A priori bound for the (rough) Heston Riccati equation}

\subsection{Bound for the fractional Riccati equation \eqref{eq:riccati}}

We recall that the canonical resolvent of a kernel $K\in L^1_{loc}(\R_+,\R)$ with parameter $\mu\in\R$ is defined as the unique solution $E_\mu\in L^1_{loc}(\R_+,\R)$ of the convolution equation 
\[E_\mu-K=\mu K\ast E_\mu,\]
cf.~\cite[Theorem 2.3.1]{GLS90}. Especially, the canonical resolvent $E_\mu$ of $K^H$ for $H\in(0,1/2)$ is given by 
\[E^H_\mu(t)=t^{H-1/2}\sum_{k=0}^\infty\frac{(\mu t^{H+1/2})^k}{\Gamma((k+1)(H+1/2))}.\]
The following lemma is an adaptation of \cite[Proposition 5.4]{AJEE19} to the solution of the parameter-dependent fractional Riccati equation \eqref{eq:riccati}, spelling out the dependence on $H$ explicitly.

\begin{Lem}\label{lem:boundpsi}
The solution $\psi$ to \eqref{eq:riccati} satisfies $\Re(\psi(\cdot;H,z))\leq0$ for all $H\in(-1/2,\infty)$ and there exists for all $H_0\in(-1/2,\infty)$ and $T\in[0,1]$ a constant $C_{T,H_0}>0$ (not depending on $z$) such that
\[\sup_{(t,H)\in[0,T]\times[H_0,\infty)}\left|\psi\left(t; H,z\right)\right|\leq C_{T,H_0}\left(\frac{1}{2}+b^2\right).\]
Moreover, $C_{T,H}$ depends continuously on $T,H$ and satisfies $C_{T,H}\to0$ as $T\to0$.
\end{Lem}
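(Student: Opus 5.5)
Lemma~\ref{lem:boundpsi} has two claims: the sign condition $\Re(\psi)\le 0$ and a quantitative bound that is uniform in $H\ge H_0$. I would follow the proof of \cite[Proposition 5.4]{AJEE19} and track the $H$-dependence at each step.

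\textbf{Sign condition.} Fix $H$ and $z=a+ib$ with $a\in[0,1]$, and write $\psi=u+iv$. Taking real parts in \eqref{eq:riccati} gives
\[
u(t)=\int_0^tK^H(t-s)\Big(\tfrac12(a^2-a)-\tfrac12 b^2-\tfrac{\nu^2}{2}v^2(s)+(\rho\nu a-\lambda)u(s)-\rho\nu b\,v(s)+\tfrac{\nu^2}{2}u^2(s)\Big)ds.
\]
The first three terms in the bracket are nonpositive, since $a^2-a\le 0$. For the remaining terms, the standard trick completes the square in $v$ against the $-b^2/2$ and $-\nu^2v^2/2$ terms, using $|\rho|<1$:
\[
-\tfrac12 b^2-\rho\nu b v-\tfrac{\nu^2}{2}v^2\le -\tfrac12(1-\rho^2)b^2\le 0 .
\]
What is left is a linear-in-$u$ term, so $u$ solves $u=\int K^H(t-s)(\text{nonpositive}+\mu(s)u(s))\,ds$ with the continuous coefficient $\mu=\rho\nu a-\lambda+\tfrac{\nu^2}{2}u$. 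The comparison result for Volterra equations with nonnegative kernels whose resolvents are nonnegative (\cite[Theorem C.1]{AJEE19}) then gives $u\le 0$. This needs the resolvent of the first kind of $K^H$ to be nonnegative and nonincreasing, which holds for every $H>-1/2$ because $K^H$ is completely monotone for $H\le 1/2$. For $H>1/2$ the kernel is increasing, so I would use the positivity of the resolvent $E_\mu^H$ directly; it is given by an explicit Mittag--Leffler series with positive coefficients.

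\textbf{Bound on $|\psi|$.} Since $u\le 0$, we have $\Re F_x=\rho\nu a-\lambda+\nu^2u\le \nu-\lambda\le 0$. Write \eqref{eq:riccati} as a linear Volterra equation in $\psi$ with source $\tfrac12(z^2-z)$ and coefficient $z(s)=(\rho\nu z-\lambda)+\tfrac{\nu^2}{2}\psi(s)$, whose real part is $\le 0$. Lemma~\ref{lem:linearvolterrabound} with $\lambda=0$ (so $E_0=K^H$), applied in the form of \cite[Corollary C.4]{AJEE19}, is not directly usable because the source coefficient itself contains $\psi$. Instead I would follow AJEE19 and bound $v=\Im\psi$ first. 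The imaginary part satisfies a linear equation with real coefficient $\rho\nu a-\lambda+\nu^2u\le 0$ and source $\tfrac12 b(2a-1)+\rho\nu b u$. The $u$-dependence is handled by a bound $|u|\lesssim b^2$, obtained from the quadratic inequality $u\ge \int K^H(t-s)(-c(1+b^2)-\kappa|u|)\,ds$. Altogether this yields
\[
|\psi(t)|\le C\Big(\int_0^tK^H(s)ds\Big)\Big(\tfrac12+b^2\Big)
\]
up to a polynomial constant. The factor $\int_0^tK^H=t^{H+1/2}/\Gamma(H+3/2)$ supplies the uniformity: for $t\le T\le1$ and $H\ge H_0$ it is bounded by $\sup_{H\ge H_0}T^{H+1/2}/\Gamma(H+3/2)$. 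This supremum is finite, continuous in $(T,H_0)$, and tends to $0$ as $T\to 0$, because $T^{H+1/2}\le T^{H_0+1/2}$ and $1/\Gamma$ is bounded on $[H_0+3/2,\infty)$.

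\textbf{Main obstacle.} The hardest step is showing that the constant from AJEE19's proof depends on $H$ only through $\int_0^TK^H$ and $\int_0^TE^H_\mu$ for nonpositive $\mu$. For $\mu\le 0$, nonnegativity of $E_\mu$ gives $E_\mu\le K^H$, and this inequality should do the job. The bound on $|\psi|$ must also be genuinely quadratic in $b$ and not worse; the sign information $\Re\psi\le 0$, $\Re F_x\le 0$ is what prevents blow-up, so I would check carefully that no Gronwall factor $e^{C|b|}$ enters.
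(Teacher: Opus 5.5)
Your sign argument is the one the paper imports from \cite{AJEE19}: complete the square $b^2+2\rho\nu bv+\nu^2v^2=(\rho b+\nu v)^2+(1-\rho^2)b^2$, then apply the real comparison theorem. Your remark that $0\le E^H_\mu\le K^H$ for $\mu\le 0$ gives uniformity in $H\ge H_0$ is also fine, and cleaner than the paper's series estimate.

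The gap is in the bound on $|\psi|$, where you discard exactly the step the paper uses. The corollary in \cite{AJEE19} (Corollary C.4) treats $h=k\ast(\zeta h+w)$. It gives $|h(t)|\le\|w\|_{\infty,t}\int_0^tE_\mu(s)ds$ whenever $\Re\zeta\le\mu$. Take $w\equiv\frac12(z^2-z)$ and $\zeta(s):=\rho\nu z-\lambda+\frac{\nu^2}{2}\psi(s)$. That $\zeta$ is built from $\psi$ is harmless. Since $\psi$ is already known to exist and be continuous, $\zeta$ is a fixed continuous coefficient and $\psi$ is the unique solution of this linear equation. The estimate sees $\zeta$ only through $\Re\zeta(s)=\rho\nu a-\lambda+\frac{\nu^2}{2}\Re\psi(s)\le\nu-\lambda=:\mu\le 0$. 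Hence $|\psi(t)|\le\frac{|z^2-z|}{2}\int_0^tE^H_\mu(s)ds$. Combined with your $E^H_\mu\le K^H$ and $|z^2-z|\le\frac12+b^2$ (AM--GM on $|z||z-1|$), this gives the lemma. Your objection is valid only for the reformulated Lemma~\ref{lem:linearvolterrabound}, where the source is absorbed into $h_0$ and then multiplied by the coefficient. It does not apply to the original corollary.

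The substitute you propose does not close. The integrand of the equation for $u=\Re\psi$ contains $-\frac{\nu^2}{2}v^2-\rho\nu bv$. So the lower bound $u\ge\int_0^tK^H(t-s)\bigl(-c(1+b^2)-\kappa|u(s)|\bigr)ds$ is not available until $v$ is controlled. Meanwhile the equation for $v=\Im\psi$ has source $\frac12b(2a-1)+\rho\nu bu$, so it needs $u$ controlled first. Your argument never breaks this circularity.

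A smaller point concerns $H>1/2$ with $\mu<0$. There the Mittag--Leffler series of $E^H_\mu$ alternates in sign, and $E^H_\mu$ can be negative; for $H=3/2$, $E^H_\mu(t)=\sin(\sqrt{|\mu|}t)/\sqrt{|\mu|}$. So your positivity claims for that range are false as stated, and so is $E^H_\mu\le K^H$. The paper is equally terse on this range.
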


\begin{proof}
Working along the lines of the proof of \cite[Proposition 5.4]{AJEE19} we obtain for any $H\in(-1/2,\infty)$ that $\Re(\psi(\cdot;H,z))\leq 0$ and 
\[\sup_{t\in[0,T]}\left|\psi(t;H,z)\right|\leq \frac{|z^2-z|}{2}\int_0^TE^{H}_{\mu}(s)ds,\]
where $E^H_{\mu}$ denotes the canonical resolvent of $K^H$ with parameter $\mu:=\nu-\lambda\leq0$. We now estimate
\begin{align*}
\int_0^TE^H_\mu(s)ds&= 
\sum_{k=0}^\infty\frac{\mu^kT^{(k+1)(H+1/2)}}{\Gamma((k+1)(H+1/2)+1)}\\
&\leq T^{H+1/2}\sum_{k=0}^\infty\frac{\mu^k}{\Gamma((k+1)(H+1/2)+1)}
\leq T^{H+1/2}C(H)<\infty,
\end{align*}
where $C$ is increasing in $H$ with $C(H)\to\infty$ as $H\to-1/2$. 
Hence, for any $H_0\in(-1/2,\infty)$ and $T\in[0,1]$ there exists a constant $C_{T,H_0}>0$ satisfying $C_T\rightarrow0$ as $T\rightarrow0$ such that 
\[\sup_{(t,H)\in[0,T]\times[H_0,\infty)}\left|\psi(t;H,z)\right|\leq C_{T,H_0}|z^2-z|.\]
Finally, we note that for $z=a+ib\in\C^*$, one has
\[|z^2-z|\leq \frac{1}{2}+b^2.\]
\end{proof}

\subsection{Bound for the standard Heston Riccati equation}

While  the a priori estimate for the fractional Riccati equation (Lemma \ref{lem:boundpsi}) gives a quadratic dependence on the Fourier argument $|b|$, the statement can be improved in the case $H=1/2$ to a linear dependence on $|b|$. The following result should be well-known, but we could not find the precise statement in the literature. So we provide a proof. 

\begin{Lem}\label{lem:boundHeston}
   Let $\psi(\cdot;1/2,z)$ denote the unique solution to the Riccati equation
    \[\phi(t;z)=\int_0^t{F}\left(z,\phi(s;z)\right)ds,\qquad t\geq0,\qquad z=a+ib\in\C^*.\]
    Then there exists a constant $C>0$ such that for every $t\geq0$ and $z=a+ib\in\C^*$,
    \[|\psi(t;1/2,z)|\leq C(1+|b|).\]
\end{Lem}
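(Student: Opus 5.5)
We plan to work directly with the closed-form Heston solution recalled in Section~\ref{sec:implementation},
\[
\psi(t;1/2,z)=\frac{\beta(z)-d(z)}{\nu^2}\,\frac{1-e^{-d(z)t}}{1-g(z)e^{-d(z)t}},\qquad \Re d(z)\geq 0,
\]
and bound its two factors separately. For the prefactor we note that $\beta(z)=\lambda-\rho\nu z$ is affine in $z$, so $|\beta(z)|\leq C(1+|b|)$ for $z=a+ib$ with $a\in[0,1]$. Moreover $d^2=\beta^2-\nu^2(z^2-z)$ is a quadratic polynomial in $b$ with bounded coefficients, so $|d(z)|\leq C(1+|b|)$. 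Hence $|\beta-d|/\nu^2\leq C(1+|b|)$. It then remains to show that the fraction $(1-e^{-dt})/(1-ge^{-dt})$ is bounded uniformly in $t\geq0$ and $z\in\C^*$, at least away from a small-time regime that we treat separately.

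For the fraction we use two facts. First, $|e^{-dt}|\leq 1$. Second, $|g(z)|\leq1$ on $\C^*$. The latter is equivalent to $\Re(\beta\bar d)\geq0$, and we would take it from \cite{KL06}, where it is the standard fact behind the choice of the principal branch. Next we need a quantitative lower bound $\Re d(z)\geq c(1+|b|)$ on $\C^*$ for some $c>0$:
\begin{itemize}
\item For large $|b|$, the asymptotic $d^2=\nu^2(1-\rho^2)b^2+O(|b|)$ together with the principal branch gives $\Re d\sim\nu\sqrt{1-\rho^2}\,|b|$.
\item For $b$ in a compact set, we argue by continuity and compactness. This requires that $\Re d$ never vanishes on $\C^*$. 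For real $z=a\in[0,1]$ this holds because $z^2-z\leq0$ forces $d\geq\beta(a)>0$. For $b\neq0$, if $\Re d=0$ then $d^2\leq 0$ would be real and nonpositive, and we would check directly that this is impossible when $\lambda\geq\nu$ and $|\rho|<1$.
\end{itemize}
Given this bound, for $t\geq 1/(1+|b|)$ we get $|e^{-dt}|\leq e^{-c}$. Hence $|1-ge^{-dt}|\geq 1-e^{-c}$, the fraction is bounded by $2/(1-e^{-c})$, and consequently $|\psi(t;1/2,z)|\leq C(1+|b|)$ in this regime.

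For the small-time regime $t\leq 1/(1+|b|)$, where $|g|$ may approach $1$ as $|b|\to\infty$ and the denominator is not obviously controlled, we instead use Lemma~\ref{lem:boundpsi} with $H=1/2$ on the horizon $T=t$. Its proof gives
\[
\sup_{s\leq t}|\psi(s;1/2,z)|\leq \frac{|z^2-z|}{2}\int_0^tE^{1/2}_\mu(s)\,ds\leq \frac{t}{2}\Big(\frac12+b^2\Big),
\]
since $\mu=\nu-\lambda\leq0$ makes the resolvent of the constant kernel $E_\mu(s)=e^{\mu s}\leq1$. For $t\leq 1/(1+|b|)$ this is at most $C(1+|b|)$. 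Combining the two regimes yields the claim for all $t\geq0$.

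The main obstacle I expect is the uniform lower bound $\Re d(z)\geq c(1+|b|)$ on the whole strip, in particular ruling out $\Re d=0$ for moderate $b$ and all $a\in[0,1]$, $\rho\in(-1,1)$. I would first try to write $d^2=(\lambda-\rho\nu a)^2+\nu^2a(1-a)+\nu^2(1-\rho^2)b^2+ib\,\nu(2\rho\nu a-2\lambda\rho+\nu-2\nu a)$ explicitly. The goal is to show that its real part plus a multiple of $|\Im|$ stays positive. If that fails, I would fall back on continuity of $\psi$ plus the time-$\infty$ limit $(\beta-d)/\nu^2$ on compact $b$-sets.
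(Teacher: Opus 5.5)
Your proof is correct, and it takes a genuinely different route from the paper's. Both arguments start from the closed form and the linear bound $|\beta-d|/\nu^2\le C(1+|b|)$. The paper then proves the stronger fact that the denominator $1-g(z)e^{-d(z)t}$ stays bounded away from zero uniformly in $t\ge0$ and $z\in\C^*$. For $t\ge t_0$, with $t_0$ independent of $b$, it uses $|g|\le1$ and a positive constant lower bound on $\Re d$. For $t<t_0$, it uses boundedness of $\Im d$ (so $e^{-dt}$ hardly rotates) together with an asymptotic analysis showing $1-|\Re g(z)|\ge c_0>0$. That analysis degenerates when $\rho=0$, which forces a separate case.

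You instead split time at the $b$-dependent threshold $1/(1+|b|)$. Above it, the growth $\Re d\ge c(1+|b|)$ gives $|e^{-dt}|\le e^{-c}$. Below it, you drop the closed form and use the a priori estimate $|\psi(t;1/2,z)|\le\frac{t}{2}|z^2-z|$ from Lemma~\ref{lem:boundpsi} (with $E_\mu(s)=e^{\mu s}\le1$), trading quadratic growth in $b$ against short time. This avoids the delicate behaviour of $g$ near the unit circle, the bound on $\Im d$, and the case split in $\rho$. What you give up is only uniform control of the fraction $(1-e^{-dt})/(1-ge^{-dt})$ itself for small $t$, which the lemma does not need.

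The step you flag as the main obstacle is immediate from the formula you already wrote down. Indeed $\Re(d^2)\ge(\lambda-|\rho|\nu)^2+\nu^2(1-\rho^2)b^2$, and for the principal root $\Re d\ge\sqrt{\Re(d^2)}$. Hence $\Re d\ge c(1+|b|)$ on all of $\C^*$ with no compactness argument; the paper uses the same estimate.

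Your route could even drop the citation of $|g|\le1$. Since $|\beta+d|\ge\Re\beta+\Re d\ge c(1+|b|)$, $|g|$ is bounded. You can then move the threshold to $\tau/(1+|b|)$ with $\tau$ large. The short-time bound holds for all $t\ge0$ by variation of constants, because $\Re\bigl(\rho\nu z-\lambda+\frac{\nu^2}{2}\psi\bigr)\le0$.
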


\begin{proof}
Recall from \cite{H93} that the solution $\phi$ is explicitly given by 
    \begin{align*}
        \psi(t;1/2,z)=\frac{\beta(z)-d(z)}{\nu^2}\cdot\frac{1-e^{-d(z)t}}{1-g(z)e^{-d(z)t}}
    \end{align*}
    with
    \[d(z)=\sqrt{\beta^2(z)-\nu^2(z^2-z)},\qquad \beta(z):=\lambda-\rho\nu z,\qquad g(z):=\frac{\beta(z)-d(z)}{\beta(z)+d(z)}.\]
    where $d(z)$ is chosen to be the principal branch of the square root, i.e.~$\Re(d(z))\geq0$, cf.~\cite[Theorem 3]{KL06}. Hence, 
        \[\left|1-e^{-d(z)t}\right|\leq 1+|e^{-\Re{d(z)}t}|\leq 2.\]
Clearly, $\beta(z)$ grows linearly in $|b|$. Moreover, we have
    \begin{align*}
       \Re(d^2(z))&=\nu^2(1-\rho^2)b^2+(\lambda-\rho\nu a)^2+\nu^2(a-a^2)\geq (\lambda-\rho\nu)^2>0,\\ 
       \Im(d^2(z))&=\left(\nu^2(1-2a)-2\rho\nu(\lambda-\rho\nu a)\right)b,
    \end{align*}
    and by \cite[Lemma A.1]{KL06},
    \[\Re(d(z))=\sqrt{\frac{1}{2}\Re(d^2(z))+\frac{1}{2}\sqrt{(\Re(d^2(z))^2+(\Im(d^2(z))^2}},\qquad \Im(d(z))=\frac{\Im(d^2(z))}{2\Re(d(z))}.\]
    Hence, as $|b|\to\infty$,
    \[\Re(d^2(z))\sim\nu^2(1-\rho^2)b^2,\qquad |\Re(d(z))|\sim \nu\sqrt{1-\rho^2}|b|,\qquad |\Im(d(z))|\to\frac{|\nu(1-2a)-2\rho(\lambda-\rho\nu a)|}{2\sqrt{1-\rho^2}}.\]
As  $\Re(d(z))\geq\sqrt{\Re(d^2(z))}\geq \lambda-\rho\nu>0$, $\Im(d(z))$ is indeed bounded for $z\in\C^*$. This shows that $d(z)$ grows linearly in $|b|$ as well. Thus, it remains to show that the term in the denominator, i.e.~$1-g(z)e^{-d(z)t}$, is bounded away from zero, uniformly in $z\in\C^*$ and $t\geq0$. 

As $\Im(d(z))$ is bounded for $z\in\C^*$, we may choose for any $\epsilon>0$ a small $t_0=t_0(\epsilon)>0$ such that $|\sin(\Im(d(z))t)|<\epsilon$ and $\cos(\Im(d(z)t)\geq0$ for all $t< t_0$. On the other hand, we have for all $t\geq t_0$ that
    \[\left|1-g(z)e^{-d(z)t}\right|\geq 1-|g(z)|e^{-\Re(d(z))t_0}\geq 1-e^{-(\lambda -\rho\nu)t_0}>0.\]
It remains to show the bound for $t< t_0$. To this end, we differentiate two cases: $\rho\neq0$ and $\rho=0$. 

First suppose that $|\rho|\in(0,1)$: It is shown in \cite[Lemma 3]{KL06} that $|g(z)|\leq 1$ for all $z\in\C$. 
       Therefore, 
        \begin{align*}
            \left|1-g(z)e^{-d(z)t}\right|&=\left|1-g(z)e^{-\Re(d(z))t}(\cos(\Im(d(z))t)-i\sin(\Im(d(z))t))\right|\\
            &\geq \left|1-g(z)e^{-\Re(d(z))t}\cos(\Im(d(z))t)\right|-\left|g(z)e^{-\Re(d(z))t}\sin(\Im(d(z))t)\right|\\
            &\geq \left|\Re\left(1-g(z)e^{-\Re(d(z))t}\cos(\Im(d(z))t)\right)\right|-\left|\sin(\Im(d(z))t)\right|\\
            &\geq 1-\left|\Re\left(g(z)\right)\right|-\epsilon
        \end{align*}
        for all $t<t_0$. So, it suffices to show that $1-|\Re(g(z))|$ is bounded away from zero. We have
    \[\Re(g(z))=\frac{|\beta(z)|^2-|d(z)|^2}{|\beta(z)+d(z)|^2}\]
    and hence
    \begin{align*}
        1-|\Re(g(z))|= \frac{2\left(|\beta(z)|^2\wedge|d(z)|^2+\Re\left(\beta(z)\bar d(z)\right)\right)}{|\beta(z)+d(z)|^2}.
    \end{align*}
Since    
    \begin{align*}
1\geq |g(z)|^2=\frac{|\beta(z)-d(z)|^2}{|\beta(z)+d(z)|^2}=\frac{|d(z)|^2+{|\beta(z)|^2-2\Re\left(\beta(z)\bar d(z)\right)}}{|d(z)|^2+{|\beta(z)|^2+2\Re\left(\beta(z)\bar d(z)\right)}},  
\end{align*}
we have $\Re\left(\beta(z)\bar d(z)\right)>0$ and hence
    \begin{align*}
        1-|\Re(g(z))|\geq \frac{2\left(|\beta(z)|^2\wedge|d(z)|^2\right)}{|\beta(z)+d(z)|^2}.
    \end{align*}
First note that as $\Re(d(z))\geq \lambda-\rho\nu>0$ and $\Re(\beta(z))\geq \lambda-\rho\nu>0$, we have $1-|\Re(g(z))|>0$ for all $z\in\C^*$. Moreover, as $|b|\to\infty$, 
\begin{align*}
\frac{|\beta(z)|^2\wedge|d(z)|^2}{|\beta(z)+d(z)|^2}\sim\frac{(\Im(\beta(z)))^2\wedge(\Re(d(z)))^2}{(\Re(d(z)))^2+(\Im(\beta(z)))^2}\sim \frac{(\rho^2\wedge (1-\rho^2))\nu^2 b^2}{\nu^2 b^2}\to\rho^2\wedge(1-\rho^2)>0.
\end{align*}
Hence, by continuity we may conclude that there is $c_0>0$ such that $1-|\Re(g(z))|>c_0>0$ for all $z\in\C^*$. Choosing $\epsilon=c_0/2$ proves the statement for $\rho\neq0$.

If $\rho=0$, we have $\beta(z)=\lambda$ and hence, $g(z)\to-1$ as $|b|\to\infty$. We estimate for $t<t_0$,
\begin{align*}
 \left|1-g(z)e^{-d(z)t}\right|&\geq \left|1+e^{-d(z)t}\right|-\left|(1+g(z))e^{-d(z)t}\right| \\
 &\geq \Re(1+e^{-d(z)t})-|1+g(z)|\\
 &\geq 1+e^{-\Re(d(z))t_0}\cos(\Im(d(z))t)-\frac{2\lambda}{|\lambda+d(z)|}\geq 1-\frac{2\lambda}{\lambda+\Re(d(z))}\\
 &\geq 1-\frac{2\lambda}{\lambda+\sqrt{\lambda^2+\nu^2b^2}}.
\end{align*}
Hence, there is $b_0>0$ such that  $\left|1-g(z)e^{-d(z)t}\right|>c_0>0$ for all $z=a+ib\in\C^*$ with $|b|\geq b_0$. On the other hand, $\rho=0$ implies that 
\begin{align*}
    \lambda^2+\nu^2b^2\leq \Re(d^2(z))\leq\nu^2b^2+\lambda^2+\frac{\nu^2}{4},\qquad
    |\Im(d^2(z))|\leq\nu^2|b|,\\
    \lambda\leq \sqrt{\lambda^2+\nu^2b^2}\leq \Re(d(z))\leq\nu\left(|b|+\frac{1}{2}\right)+\lambda,\qquad
    |\Im(d(z))|\leq\nu,
\end{align*}
and hence
\begin{align*}
    |g(z)|^2=\frac{(\Im(d(z)))^2+(\lambda-\Re(d(z)))^2}{(\Im(d(z)))^2+(\lambda+\Re(d(z)))^2}
    \end{align*}
is bounded away from $1$ if $|b|\leq b_0$. This shows that also in this case
\begin{align*}
     \left|1-g(z)e^{-d(z)t}\right|\geq 1-|g(z)|>0,
\end{align*}
uniformly in $z=a+ib\in\C^*$ with $|b|\leq b_0$.

\end{proof}

\section{Numerical approximations of weakly singular Volterra equations}

\subsection{Linear Volterra equations with fractional-logarithmic kernels}
\label{sec:fraclogvolterra}

We briefly describe the numerical scheme used to solve the linear Volterra
equations that appear in the expansion of the fractional Riccati equation.
Starting from an approximate reference solution
\(\psi(\cdot;H_0,z)=\psi^{H_0}(\cdot,z)\) on a time grid
\(
  0=t_0<t_1<\cdots<t_J=T,
\)
we compute all expansion terms on the same grid.
Recalling Theorem~\ref{thm:ndif}, the kernels encountered in the recursion are
of the form
\[
  K(t)=t^{H_0-\frac12}\log^n(t),
  \qquad n\in\mathbb N_0,
\]
and finite linear combinations thereof.
The linear Volterra equations that appear in the recursion then have the generic
form
\begin{equation}\label{eq:lin_volterra_apx}
  f(t)
  =
  r(t)
  +
  \int_0^t K(t-s)g(s;z)f(s)\dd{s},
\end{equation}
where \(f\) denotes the unknown expansion coefficient at the current order,
\(r\) is the known inhomogeneity constructed from lower-order terms, and
\[
  g(s;z)
  :=
  F_x\left(z,\psi(s;H_0,z)\right)
  =
  \rho\nu z-\lambda
  +
  \nu^2\psi(s;H_0,z)
\]
is the common coefficient obtained by linearizing the Riccati nonlinearity
around the reference solution. We assume that approximations
\(g_j \approx g(t_j;z)\) and \(r_j \approx r(t_j)\) are available on the grid
from the previous expansion terms and the approximate reference solution. We now
derive an approximation scheme for \(f_j\approx f(t_j)\).

We discretize the Volterra operator by product integration. On each interval
\([t_m,t_{m+1}]\), the integrand is approximated by linear interpolation. For
\(m<j\), define
\[
  \omega^L_{j,m}
  :=
  \int_{t_m}^{t_{m+1}}
  K(t_j-s)\frac{t_{m+1}-s}{t_{m+1}-t_m}\dd{s},
  \qquad
  \omega^R_{j,m}
  :=
  \int_{t_m}^{t_{m+1}}
  K(t_j-s)\frac{s-t_m}{t_{m+1}-t_m}\dd{s},
\]
and the lower-triangular product-integration matrix \(\Omega\) by
\[
  \Omega_{j,\ell}
  =
  \sum_{m=0}^{j-1}
  \left(
    \mathbf 1_{\{\ell=m\}}\omega^L_{j,m}
    +
    \mathbf 1_{\{\ell=m+1\}}\omega^R_{j,m}
  \right),
  \qquad 0\le \ell\le j,
\]
and set \(\Omega_{j,\ell}=0\) for \(\ell>j\).

The product-integration approximation of \eqref{eq:lin_volterra_apx} is
obtained by applying this matrix to the grid function obtained from linear
interpolation of \(g_\ell f_\ell\) on the grid, i.e.,
\[
  f_j
  =
  r_j
  +
  \sum_{\ell=0}^j \Omega_{j,\ell}g_\ell f_\ell.
\]
Solving for the unknown \(f_j\), we thus obtain the forward substitution
\[
  f_j
  =
  \frac{
    r_j+\sum_{\ell=0}^{j-1}\Omega_{j,\ell}g_\ell f_\ell
  }{
    1-\Omega_{j,j}g_j
  },
  \qquad j=0,\dots,J.
\]
Hence solving the linear Volterra equation using this scheme amounts to a
lower-triangular forward substitution.

To make this scheme explicit, it thus suffices to compute the weights
\(\omega^L\) and \(\omega^R\) for the fractional-logarithmic kernels appearing
in the expansion. After the change of variables \(u=t_j-s\), these integrals
reduce to primitives of
\[
  u^{H_0-\frac12}\log^n(u),
\]
which are obtained recursively from
\[
  \int u^{H_0-\frac12}\log^n(u)\dd{u}
  =
  \frac{u^{H_0+\frac12}}{H_0+\frac12}\log^n(u)
  -
  \frac{n}{H_0+\frac12}
  \int u^{H_0-\frac12}\log^{n-1}(u)\dd{u}.
\]
Hence all quadrature weights can be precomputed once for the chosen grid and
then reused throughout the expansion recursion.

\begin{Rmk}
Product integration goes back to \cite{Young1954application}, and the product
trapezoidal rule employed here is classical for second-kind Volterra equations
with weakly singular kernels; see
\cite{Linz1985,deHoogWeiss1974,BrunnerVanDerHouwen1986}. For
the pure fractional kernel ($n=0$) with smooth data, the scheme is second-order
accurate \cite{deHoogWeiss1974}; for the non-smooth solutions typical of this
setting, the attainable order on uniform meshes is $2-\alpha$ away from the
origin \cite{Dixon1985}, with the global rate reduced to $1+\alpha$,
$\alpha=H_0+\tfrac12$, see also \cite{DiethelmFordFreed2004,Garrappa2015}.
Our setting falls outside the hypotheses of these results in two respects:
the kernels carry mixed fractional-logarithmic singularities
$t^{\alpha-1}\log^n t$, and the coefficient $g(\cdot;z)$ and inhomogeneity
$r$ inherit from the expansion terms a power-logarithmic behavior of the form
$t^{j\alpha+m}\log^k t$ near the origin, rather than the smoothness assumed in
the classical theory.
Nevertheless we expect the rates of the pure fractional
case to persist up to logarithmic factors.
\end{Rmk}

\subsection{An implicit product-integration scheme for the fractional Riccati equation}
\label{sec:implicit_riccati_pi}

In this section we describe a fully implicit product-integration discretization of the fractional Riccati equation \eqref{eq:riccati}. While implicit product-integration rules for weakly singular Volterra equations are classical \cite{Linz1985,JumarhonMcKee1996,Garrappa2018}, the nonlinear step is usually closed by Newton or fixed-point iteration. Here, the quadratic Riccati driver makes each update an explicitly solvable scalar quadratic equation.

Closely related are the ``iVi'' simulation schemes, where the Inverse Gaussian increment laws can be recovered from right-endpoint discretizations of the corresponding conditional Riccati and Riccati--Volterra equations, with the quadratic root having non-positive real part selected
\cite[Remark~1.6]{AbiJaber2024Simple}
\cite[Remark~1.7]{AbiJaberAttal2025}.
In \cite{XieGao2025}, explicitly solvable quadratic updates are likewise obtained from an implicit finite-difference discretization of a real-valued fractional Riccati equation, with the positive branch selected.
We select the root closest to the linearized implicit update. This is reminiscent of predictor--corrector continuation for algebraic Riccati equations, where a linearized solution provides a predictor for the nearby nonlinear branch \cite{DieciFriedman2001}, as well as of complex-branch tracking in Heston-type formulas \cite{Albrecher,KL06,KL10}.

To the best of our knowledge, this combination of fully implicit product integration, exact quadratic updates, and linearized branch selection has not been studied so for the deterministic fractional Riccati equation. In our tests, it is substantially more stable than the predictor--corrector Adams update in the hyper-rough regime $H<0$, especially for large Fourier frequencies.
A rigorous convergence analysis is left for future work; since the implicit
equation is solved exactly, the rates should follow from the classical theory
of implicit product-integration and collocation methods for weakly singular
Volterra equations
\cite{JumarhonMcKee1996,Dixon1985,BrunnerPedasVainikko1999,Garrappa2015},
adapted to the regularity of \(\psi^H\) at the origin and with constants
depending on \(z\); see also \cite{DiethelmFordFreed2004,ShiLyuMa2022}.

Fix a time grid \(0=t_0<t_1<\cdots<t_J=T\) and let \(\Omega\) be the
product-integration matrix defined in the previous section on this grid, now
for the fractional kernel \(K(t)=t^{H-\frac12}/\Gamma(H+1/2)\)
specifically; recall that \(H>-1/2\), so that \(K\) is integrable and
\(\Omega\) is well defined, with diagonal entries
\(\Omega_{j,j}=(t_j-t_{j-1})^{H+\frac12}/\Gamma(H+5/2)\). Throughout,
\(z\in\mathbb{C}^{\ast}\) is fixed; rhe computation is easily vectorized over \(z\), as
required for the evaluation of \(\psi^H(T,\cdot)\) along a Fourier pricing
contour.
Further assume that the mesh is fine enough that
\[
  \Omega_{j,j}\,\lvert\rho\nu z-\lambda\rvert<1,
  \qquad j=1,\dots,J.
\]
Recalling the definition of the Riccati driver \(F\) in
\eqref{def:F}, the fully implicit product-integration discretization is
obtained by piecewise-linear interpolation of the smooth factor
\(s\mapsto F(z,\psi^H(s,z))\) of the integrand, i.e.,
\[
  \psi_j
  =
  \sum_{\ell=0}^j
  \Omega_{j,\ell}F(z,\psi_\ell),
  \qquad
  \psi_j\approx \psi^H(t_j,z),
  \qquad
  \psi_0=0.
\]
Equivalently, separating the known history from the diagonal contribution
gives
\[
  \psi_j
  =
  h_j
  +
  \Omega_{j,j}F(z,\psi_j),
  \qquad
  h_j
  :=
  \sum_{\ell=0}^{j-1}
  \Omega_{j,\ell}F(z,\psi_\ell).
\]
Since \(F\) is quadratic in its second argument, the implicit update at
\(t_j\) is a scalar quadratic equation for \(\psi_j\), whose roots we write
as
\[
  \psi_j^{\pm}
  =
  \frac{
    2\left(h_j+\frac12\Omega_{j,j}\,z(z-1)\right)
  }{
    \big(1-\Omega_{j,j}(\rho\nu z-\lambda)\big)
    \pm
    \sqrt{
      \big(1-\Omega_{j,j}(\rho\nu z-\lambda)\big)^2
      -2\,\Omega_{j,j}\nu^2\left(h_j+\frac12\Omega_{j,j}\,z(z-1)\right)
    }
  },
\]
where the branch is selected such that \(\psi_j\) is closest to
the linearized implicit update
\[
  \psi_j^{\mathrm{lin}}
  =
  \frac{
    h_j+\frac12\Omega_{j,j}\,z(z-1)
  }{
    1-\Omega_{j,j}(\rho\nu z-\lambda)
  },
\]
obtained by dropping the quadratic term of \(F\). The
selection rule picks the regular branch of a singularly perturbed quadratic:
as the mesh is refined, \(\Omega_{j,j}\to0\) and one root has the same limit as 
\(\psi_j^{\mathrm{lin}}\), while the other
diverges like \(1/\Omega_{j,j}\), and only the former is a consistent
approximation of \(\psi^H(t_j,z)\)---the unique solution of the implicit
update that remains bounded under refinement, and the limit of a fixed-point
iteration started at \(h_j\). Since the correct branch may switch as \(z\)
and \(t_j\) vary in the complex case, the comparison with
\(\psi_j^{\mathrm{lin}}\) provides an adaptive identification.

\subsection{Root-Padé approximation of the fractional Riccati solution}
\label{app:root-pade}

Rational approximations of the fractional Riccati solution go back to
Gatheral and Radoi\v{c}i\'c
\cite{gatheral_radoicic_2019,gatheral_radoicic_2024}, who match low-order
rational functions to both the short-time expansion and the long-time
asymptotics around the stable root, in the spirit of global Pad\'e
approximations of Mittag-Leffler functions
\cite{AtkinsonOsseiran2011,ZengChen2015,SarumiFuratiKhaliq2020}. The
construction used here is somewhat less sophisticated: after a M\"obius
transformation of the Riccati variable, we match a high-order rational
function to the short-time expansion and impose only the leading long-time
coefficient, using the natural scaled variable
$x=\Delta(z)t^{\alpha}$. In the classical case, the transformed variable is
exactly $e^{-x}$; replacing it by $E_{\alpha}(-x)$---a substitution known from
the fractional logistic equation to be inexact for $\alpha<1$
\cite{West2015,AreaLosadaNieto2016}---nevertheless reproduces the leading
long-time ansatz of \cite{gatheral_radoicic_2024}. The transformation enforces
the initial value and stationary limit, while the rational degrees enforce
the decay order. All but one matching condition come from the short-time
expansion, available to arbitrary order from a cheap recursion, and the
remaining one fixes the leading approach to the stationary root. This avoids
higher-order long-time coefficients, possibly at some cost in accuracy at
large but finite maturities; related allocations favouring short-time
conditions are studied in \cite{JengKilicman2020math}.
Within the parameter range considered in our numerical experiments, in particular for $H\in[-0.4,0.4]$, the resulting root-Pad\'e approximant of degree $m=7$ improved over the $[4/4]$ Pad\'e approximation of \cite{gatheral_radoicic_2019}, particularly in the out-of-the-money wings. A detailed error analysis is left for future work.

Throughout, we use the notation of \eqref{eq:riccati}. Fix the expansion
anchor $H_0$ and set $\alpha=H_0+1/2$. Recall from
\cite[Sections~2--3]{gatheral_radoicic_2024} that the short-time behaviour of
$\psi^{H_0}(t,z)$ is naturally expanded in powers of $y=t^{\alpha}$,
\[
  \psi^{H_0}(t,z)=\sum_{n\ge1}b_n(z)\,y^{n},
\]
where
\[
  b_1(z)=\frac{z(z-1)}{2\,\Gamma(\alpha+1)},
  \qquad
  b_{n+1}(z)
  =
  \frac{\Gamma(n\alpha+1)}{\Gamma((n+1)\alpha+1)}
  \left(
    (\rho\nu z-\lambda)\,b_n(z)
    +\frac{\nu^2}{2}\sum_{k=1}^{n-1}b_k(z)\,b_{n-k}(z)
  \right),
\]
for all $n\ge1$, which determines the short-time expansion to arbitrary
order. The long-time behaviour is governed by the stationary roots of
$F(z,\cdot)$,
\[
  r_\pm(z)=\frac{-(\rho\nu z-\lambda)\pm\Delta(z)}{\nu^2},
  \qquad
  \Delta(z)=\sqrt{(\rho\nu z-\lambda)^2-\nu^2\,z(z-1)},
\]
where the branch $\Re\Delta(z)\ge0$ makes $r_-(z)$ the stable
root, and takes the form
\[
  \psi^{H_0}(t,z)\sim r_-(z)+\frac{g_1(z)}{y}+\frac{g_2(z)}{y^2}+\cdots,
  \qquad y\to\infty.
\]

At this point, we depart from the direct rational approximation of
$\psi^{H_0}$ in \cite{gatheral_radoicic_2019,gatheral_radoicic_2024}.
We first apply the root-adapted fractional linear transformation that
linearizes the classical ($\alpha=1$) Riccati flow: with
$A(z):=r_-(z)\,r_+(z)=z(z-1)/\nu^2$, set
\[
  E(t,z)
  =
  \frac{A(z)-r_+(z)\,\psi^{H_0}(t,z)}{A(z)-r_-(z)\,\psi^{H_0}(t,z)},
  \qquad\text{equivalently}\qquad
  \psi^{H_0}(t,z)
  =
  A(z)\,\frac{1-E(t,z)}{r_+(z)-r_-(z)\,E(t,z)}.
\]
Since $\psi^{H_0}(0,z)=0$, we have $E(0,z)=1$, and convergence of
$\psi^{H_0}$ to the stable root is equivalent to $E(t,z)\to0$: thus $E$
describes the relaxation from the initial value to the stationary Riccati
root, with $E(t,z)=\exp({-\Delta(z)t})$ exactly in the classical case. Formal
power-series division yields the expansion
$E(t,z)=1+\sum_{n\ge1}e_n(z)\,x^{n}$ in the scaled variable
\[
  x=\Delta(z)\,t^{\alpha}.
\]
For
$0<\alpha<1$, the leading long-time term implies
$E(t,z)\sim C_\alpha(z)/x$, where
$$C_\alpha(z)=\frac{r_+(z)}{(r_+(z)-r_-(z))\Gamma(1-\alpha)}.$$
We then approximate $E$ by a rational function of type $[m/(m{+}1)]$,
\[
  E(t,z)\approx\frac{P_m(x,z)}{Q_{m+1}(x,z)},
  \qquad
  Q_{m+1}(x,z)\,E(x,z)-P_m(x,z)=\mathcal{O}\big(x^{2m+1}\big),
  \qquad
  \frac{p_m(z)}{q_{m+1}(z)}=C_\alpha(z),
\]
where $p_m(z)$ and $q_{m+1}(z)$ denote the leading coefficients of
$P_m(\cdot,z)$ and $Q_{m+1}(\cdot,z)$, respectively, and
$P_m(0,z)=Q_{m+1}(0,z)=1$. Thus the highest short-time matching condition is
replaced by the exact leading tail coefficient. As the denominator degree
exceeds the numerator degree by one, $P_m/Q_{m+1}\to0$ as $x\to\infty$,
matching the $\mathcal{O}(y^{-1})$ decay of $E$. The root-Pad\'e approximation
is therefore
\[
  \psi^{H_0}(t,z)
  \approx
  A(z)\,
  \frac{1-P_m(x,z)/Q_{m+1}(x,z)}
       {r_+(z)-r_-(z)\,P_m(x,z)/Q_{m+1}(x,z)},
  \qquad
  x=\Delta(z)\,t^{\alpha}.
\]

\end{document}